\documentclass[aps,prx,twocolumn,10pt,superscriptaddress]{revtex4-1}

\usepackage{mathrsfs}
\usepackage{bbm}
\usepackage{amsmath}
\usepackage{amssymb}
\usepackage{amsthm}
\usepackage{graphicx}
\usepackage{MnSymbol}
\usepackage{mathtools}
\usepackage{stmaryrd}
\usepackage{enumitem}
\usepackage{bbold}
\usepackage[italicdiff]{physics}
\usepackage{chemformula}
\usepackage{dsfont}
\usepackage{float}

\makeatletter
\usepackage{hyperref}
\usepackage{xcolor}
\definecolor{supcol}{RGB}{10,50,180}
\definecolor{eqcol}{RGB}{220,10,100}
\hypersetup{
	colorlinks,
	citecolor=supcol,
	linkcolor=eqcol,
	urlcolor=supcol
}

\allowdisplaybreaks

\newtheorem{theorem}{Theorem}

\newtheorem{proposition}[theorem]{Proposition}

\newcommand{\mds}{\mathds}
\newcommand{\mca}{\mathcal}

\newcommand{\msf}{\mathsf}
\newcommand{\mfr}{\mathfrak}

\newcommand{\iset}[1]{\llbracket #1\rrbracket}

\DeclareMathOperator{\mvar}{var}

\begin{document}
\title{Time-Cost-Error Trade-Off Relation in Thermodynamics: The Third Law and Beyond}

\author{Tan Van Vu}
\email{tan.vu@yukawa.kyoto-u.ac.jp}
\affiliation{Center for Gravitational Physics and Quantum Information, Yukawa Institute for Theoretical Physics, Kyoto University, Kitashirakawa Oiwakecho, Sakyo-ku, Kyoto 606-8502, Japan}

\author{Keiji Saito}
\email{keiji.saitoh@scphys.kyoto-u.ac.jp}
\affiliation{Department of Physics, Kyoto University, Kyoto 606-8502, Japan}

\date{\today}

\begin{abstract}
Elucidating fundamental limitations inherent in physical systems is a central subject in physics. For important thermodynamic operations such as information erasure, cooling, and copying, resources like time and energetic cost must be expended to achieve the desired outcome within a predetermined error margin. In the context of cooling, the unattainability principle of the third law of thermodynamics asserts that infinite ``resources'' are needed to reach absolute zero. However, the precise identification of relevant resources and how they jointly constrain achievable error remains unclear within the frameworks of stochastic and quantum thermodynamics. In this work, we introduce the concept of separated states, which consist of fully unoccupied and occupied states, and formulate the corresponding thermokinetic cost and error, thereby establishing a unifying framework for a broad class of thermodynamic operations. We then uncover a three-way trade-off relation between time, cost, and error for thermodynamic operations aimed at creating separated states, simply expressed as $\tau\mathcal{C}\varepsilon_{\tau}\ge 1-\eta$. This fundamental relation is applicable to diverse thermodynamic operations, including information erasure, cooling, and copying. It provides a profound quantification of the unattainability principle in the third law of thermodynamics in a general form. Building upon this relation, we explore the quantitative limitations governing cooling operations, the preparation of separated states, and a no-go theorem for exact classical copying. Furthermore, we extend these findings to the quantum regime, encompassing both Markovian and non-Markovian dynamics. Specifically, within Lindblad dynamics, we derive a similar three-way trade-off relation that quantifies the cost of achieving a pure state with a given error. The generalization to general quantum dynamics involving a system coupled to a finite bath implies that the dissipative cost becomes infinite as the quantum system is exactly cooled down to the ground state or perfectly reset to a pure state, thereby resolving an open question regarding the thermodynamic cost of information erasure.
\end{abstract}

\pacs{}
\maketitle

\section{Introduction}
\subsection{Background}
The concepts of time, cost, and error play foundational roles across various scientific disciplines. Intuitively, implementing a high-accuracy process at a finite speed unavoidably incurs a cost. This no-free-lunch principle is ubiquitous in nature, ranging from intricate operations in biological systems to fundamental processes in physics. For instance, sensory adaptation systems, which typically employ feedback control mechanisms in noisy environments, must finely balance accuracy, speed, and energy expenditure to function effectively \cite{Lan.2012.NP}. Similarly, it was experimentally demonstrated that fast and accurate computation is unavoidably accompanied by high dissipation \cite{Brut.2012.N,Jun.2014.PRL,Hong.2016.SA,Yan.2018.PRL,Dago.2021.PRL,Scandi.2022.PRL}. Quantifying the trade-offs among these incompatible quantities is crucial not only for understanding the inherent limitations of both natural and engineered systems but also for the development of technologies that operate at the edge of these fundamental constraints.

Thermodynamics, a well-established field with a long history, primarily studies heat, work, energy transformation, and their relationships with the physical properties of both equilibrium and nonequilibrium systems. Over the past two decades, its scope has broadened from macroscopic to microscopic systems operating within finite times, leading to the development of stochastic and quantum thermodynamics \cite{Sekimoto.2010,Seifert.2012.RPP,Vinjanampathy.2016.CP,Goold.2016.JPA,Deffner.2019}. These advanced frameworks provide a theoretical basis for exploring novel trade-off relations inherent in fluctuating systems from an energetic perspective. A prominent example is the thermodynamic uncertainty relation \cite{Barato.2015.PRL,Gingrich.2016.PRL,Horowitz.2017.PRE,Dechant.2018.JSM,Hasegawa.2019.PRL,Timpanaro.2019.PRL,Koyuk.2020.PRL,Miller.2021.PRL.TUR,Vu.2022.PRL.TUR,Horowitz.2020.NP}, which demonstrates a trade-off between cost and precision, asserting that significant dissipation is necessary to achieve high accuracy of observables. This relation yields various implications in nonequilibrium systems, including power-efficiency trade-offs in heat engines \cite{Pietzonka.2018.PRL}, bounds on diffusion extents \cite{Hartich.2021.PRL}, and useful methods for inference of dissipation \cite{Li.2019.NC,Manikandan.2020.PRL,Vu.2020.PRE,Otsubo.2020.PRE,Dechant.2021.PRX}. Another pivotal result is the thermodynamic speed limit \cite{Aurell.2012.JSP,Shiraishi.2018.PRL,Vo.2020.PRE,Ito.2020.PRX,Vu.2021.PRL,Yoshimura.2021.PRL,Salazar.2022.PRE,Vu.2023.PRL.TSL,Kwon.2024.PRE,Delvenne.2024.PRE}, which expresses a trade-off between time and cost in the transformation of a system's state, implying that greater dissipation is required for faster transformations. The speed limit concept also sets a finite-time bound for the thermodynamic cost of information erasure, extending beyond the conventional Landauer principle \cite{Landauer.1961.JRD,Goold.2015.PRL,Proesmans.2020.PRL,Zhen.2021.PRL,Vu.2022.PRL,Lee.2022.PRL,Vu.2023.PRX,Rolandi.2023.PRL}. Other intriguing findings include time-dissipation \cite{Falasco.2020.PRL,Neri.2022.SP} and time-information uncertainty relations \cite{Nicholson.2020.NP,Pintos.2022.PRX}, which characterize the trade-offs between time and cost in changing observables.

\begin{figure}[t]
\centering
\includegraphics[width=1.0\linewidth]{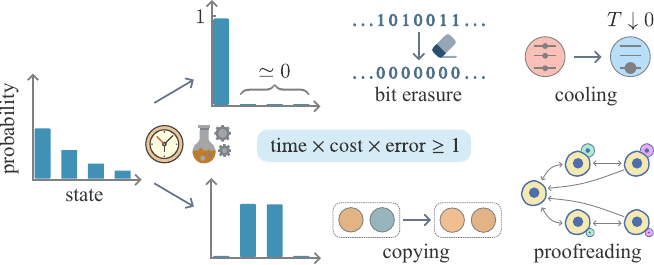}
\protect\caption{In thermodynamic processes that reduce the observation probability to zero, a trade-off relation between time, cost, and error naturally arises. Examples include: (i) erasing information from bits (where the probability of all microstates corresponding to the logical state `$1$’ is reduced to zero), (ii) cooling toward the ground state (where the probability of excited states is reduced to zero), (iii) copying (where the probability of states differing between the source and target is reduced to zero), and (iv) kinetic proofreading (where the probability of pathways producing incorrect products is reduced to zero).}\label{fig:Cover}
\end{figure}

In this study, we focus on a general class of thermodynamic operations that require high accuracy and small thermodynamic cost within finite time. A common feature of many important thermodynamic operations is the aim to achieve one or more desired states while ideally reducing the probabilities of all other states to zero (see Fig.~\ref{fig:Cover} for illustration).
This class of processes holds significance not only in physics but also in biology. A representative example is the task of bit erasure \cite{Bennett.1982.IJTP,Rio.2011.N,Dillenschneider.2009.PRL,Munson.2025.PRXQ}. In this task, the goal is to reduce the probability of observing the bit in the logical state `$1$' to zero, ensuring that the bit is reliably reset to the logical state `$0$', regardless of its initial state. Preparing pure states or cooling systems to ground states are also crucial thermodynamic operations \cite{Schulman.2005.PRL,Chan.2011.N,Teufel.2011.N,Peterson.2016.PRL,Guo.2019.PRL,Soldati.2022.PRL,Taranto.2025.PRL}. Cooling a system to its ground state involves reducing the probability of finding the system in any higher energy eigenstates to zero. In the biological context, a notable example is the role of proofreading mechanism \cite{Hopfield.1974.PNAS,Bennett.1979.B,Sartori.2015.PRX} in information processing within living systems. This mechanisms incorporates irreversible intermediate steps that enhance discrimination between two possible reaction pathways, effectively reducing the likelihood of following the incorrect pathway that leads to erroneous products. 

For the processes introduced above, a trade-off between error and resource usage is intuitively expected. Specifically, the achievable error in the probability is always constrained by the available resources of time and cost. For the problem of preparing pure states, the no-go theorem has been established \cite{Franco.2013.SR,Wu.2013.SR,Ticozzi.2014.SR}, asserting that it is impossible to create a pure state from a mixed one using a unitary transformation applied to the system and a finite-dimensional reservoir. Furthermore, quantitative relations that link the achievable error to the volume and energy bandwidth of the reservoir have been explored in the literature \cite{Masanes.2017.NC,Wilming.2017.PRX,Scharlau.2018.Q,Clivaz.2019.PRL}. Despite these advancements, previous studies have largely focused on specific thermodynamic operations, such as cooling, without providing a unified framework for the general thermodynamic constraints---such as trade-offs between time, cost, and error---that underlie a broad class of important processes, including information erasure, cooling, copying, and kinetic proofreading. This leaves several fundamental questions open: What are the relevant resources in these tasks? How do these resources jointly constrain the achievable error? Moreover, a definitive relation that clearly elucidates the trade-off among three fundamental yet incompatible quantities---{\it time}, {\it cost}, and {\it error}---remains lacking.

\subsection{Summary of results}
In this work, we resolve this open problem within a unified framework across both classical and quantum domains. To obtain the general thermodynamic constraints, we introduce the concept of {\it separated states}, which consist of two classes of states: desired states that the operation aims to achieve and undesired states, for which the probabilities should ideally be zero after the operation. Building on this concept, we quantitatively elucidate a three-way trade-off relation among the quantities of time, cost, and error, demonstrating that infinite resources of time and cost are required to reduce the error to zero [Eq.~\eqref{eq:main.res}]. Its explicit form reads
\begin{equation}
	\tau\mca{C}\varepsilon_\tau\ge 1-\eta~(\text{or}~\varepsilon_0\tau\mca{C}\ge\eta^{-1}-1),
\end{equation}
where $\tau$ and $\mca{C}$ denote time and cost, respectively, $\varepsilon_\tau$ ($\varepsilon_0$) is the error at the final (initial) time, and $\eta=\varepsilon_\tau/\varepsilon_0$ is the relative error (Sec.~\ref{sec:tradeoff}). This transition from conceptual intuition to rigorous mathematical formulation not only clarifies theoretical ambiguities but also paves the way for several essential applications in nonequilibrium thermodynamics, as explained below.

(i) As the first application of this trade-off relation, we establish a thermodynamic bound that embodies the third law of thermodynamics in the form of the unattainability principle for Markovian dynamics (Sec.~\ref{sec:uptl}). This bound determines a lower limit on the error in executing the task of preparing ground states or pure states in terms of time and cost. Additionally, as a direct consequence, we obtain a lower bound on the achievable temperature when cooling the system toward the ground state. These bounds provide a quantitatively comprehensive understanding of the third law of thermodynamics. 

(ii) Next, expanding beyond the traditional concepts of ground states and pure states, we investigate the problem of preparing separated states, where the probability is restricted to a specific set of desired states (Sec.~\ref{sec:lpss}). By applying the trade-off relation, we establish a no-go theorem for the preparation of these states and derive a thermodynamic bound on the achievable error in terms of time and cost. 

(iii) In the third application, we address the copying problem, where the state of a source is replicated into the state of a target using a machine (Sec.~\ref{sec:copying}). More precisely, we formulate a no-go theorem for exact classical copying, covering a broader range of scenarios than the conventional cloning problem. Furthermore, we demonstrate the hardness of exact copying from an information-theoretic perspective, highlighting how finite resources fundamentally constrain the attainment of the maximum mutual information between the source and the target.

(iv) Last, we demonstrate the universality of our findings by generalizing them to quantum cases, applicable to both Markovian dynamics and generic quantum systems coupled to finite-size reservoirs (Sec.~\ref{sec:quantum.gen}). Remarkably, they resolve an open question regarding the thermodynamic cost of erasing information: the cost diverges in the limit of perfect erasure.

\subsection{Relevant literature}
We briefly review the relevant literature and highlight how our results differ from existing works. While the third law of thermodynamics has been extensively discussed in the context of classical thermodynamics for macroscopic systems, our focus here is on microscopic dynamics, which is more relevant to our study.

In quantum settings, the third law has been investigated using various models. In Refs.~\cite{Levy.2012.PRE,Kolar.2012.PRL}, the unattainability principle of zero temperature was studied through a quantum refrigerator model composed of two-level systems operating in a stationary state. The cooling dynamics of a cold bath, weakly coupled to the refrigerator, was characterized by the exponent $\zeta$ in the relation $dT_c/dt \propto -T_c^\zeta$ as the bath temperature approaches absolute zero ($T_c\to 0$). The validity of the third law can be assessed by checking whether $\zeta \ge 1$.

Another direction involves the study of approximate cooling, where the goal is to bring the target system close to its ground state. In a general setup where the target is coupled to a finite-dimensional thermal bath and the composite system undergoes a unitary evolution, Ref.~\cite{Reeb.2014.NJP} demonstrated that the cooling error is constrained by the energy bandwidth of the bath Hamiltonian.

Further development was made in Ref.~\cite{Wilming.2017.PRX}, which analyzed cooling near absolute zero within the framework of catalytic thermal operations. In that work, a composite system comprising the target, a resource, a thermal bath, and a catalyst undergoes an energy-conserving unitary transformation. At the final time, the target is expected to be close to its ground state, while the catalyst approximately returns to its initial state. Assuming the target initially thermalizes with its environment, necessary and sufficient conditions for approximate cooling were derived, formulated in terms of a function $V_\beta(\varrho;H)$ called the ``vacancy.'' The third law is then interpreted through the divergence of this function, which implies that an infinite amount of ``resources'' is required for exact ground-state preparation.

In a related yet distinct setting, Ref.~\cite{Masanes.2017.NC} considered a unitary evolution involving a target, a thermal bath, and a work storage system, where the global unitary preserves the total energy. In this setup, the cooling error was shown to be bounded by both the bath volume and the worst-case work fluctuations. When the bath Hamiltonian satisfies a locality condition, the Lieb-Robinson bound \cite{Lieb.1972.CMP} can be employed to relate the cooling error to the finite operational time; in this case, the third law emerges as a trade-off between the cooling error and the required time.

These studies collectively provide significant insights into the theoretical constraints imposed by the third law of thermodynamics on cooling processes near absolute zero. However, it remains unclear how thermodynamic and kinetic costs, such as entropy production and dynamical activity, constrain cooling limitations together with the resource of time. Our study addresses this gap by, for the first time, elucidating the interplay between time and these costs within the frameworks of stochastic and quantum thermodynamics. We consider a general setup where the target system is subject to a time-dependent protocol, and the operation does not necessarily conserve the total energy. Our work advances the understanding of the third law of thermodynamics in three key aspects:
\begin{enumerate}
	\item Generalization of the third law: We extend the third law (i.e., a constraint on cooling to absolute zero) to a more general concept of creating separated states, encompassing a wide range of fundamental thermodynamic operations such as information erasure and copying.
	\item Identification of resources: We identify the relevant resources and demonstrate that time is not the sole determining factor; other contributions, such as dissipation and frenesy, also play essential roles.
	\item Quantification of trade-offs: We rigorously formulate the third law as a \emph{thermodynamic} trade-off between time, cost, and error within the frameworks of stochastic and quantum thermodynamics.
\end{enumerate}
The remainder of the paper is organized as follows. Section \ref{sec:setup} introduces the concept of separated states, dynamics, and the notions of error and cost used in this study. In Sec.~\ref{sec:results}, we present our main results, including the central relation and its applications in nonequilibrium thermodynamics. Section \ref{sec:demon} provides numerical demonstrations of our findings. Finally, Sec.~\ref{sec:sum.out} gives a summary and outlook. Detailed analytical calculations and derivations are presented in the Appendices.

\section{Setup}\label{sec:setup}
\subsection{Separated states}
We introduce the concept of separated states, which encompass relevant states in nonequilibrium thermodynamic operations and play a key object in this study.
A $d$-dimensional distribution $\ket{p}=[p_1,\dots,p_d]^\top$ is considered separated with respect to a nonempty set $\mfr{u}\subset\{1,\dots,d\}\eqqcolon\iset{d}$ if no states in $\mfr{u}$ are probable; that is,
\begin{equation}
	p_{\mfr{u}}\coloneqq\sum_{n\in \mfr{u}}p_n=0.
\end{equation}
Equivalently, the probability supported by the complement $\mfr{d}=\iset{d}\setminus\mfr{u}$ is equal to $1$ (i.e., $p_{\mfr{d}}\coloneqq\sum_{n\in \mfr{d}}p_n=1$).
The sets $\mfr{u}$ and $\mfr{d}$ represent the undesired and desired states, respectively, indicating the states we want the system to avoid or occupy (see Fig.~\ref{fig:SepStateMarkov}).
For instance, in the bit erasure operation depicted in Fig.~\ref{fig:Cover}, the desired state is the reset state `$00\dots 0$', while all other states are considered undesired. Similarly, in the cooling operation, the desired state is the ground state, with all excited states being undesired. It is important to note that the notion of separated distributions can include multiple desired states (i.e., $|\mfr{d}|\ge 2$). For example, in the copying process, the desired states are those where the bit and memory have the same values, while all other states are undesired.

\begin{figure}[t!]
\centering
\includegraphics[width=0.9\linewidth]{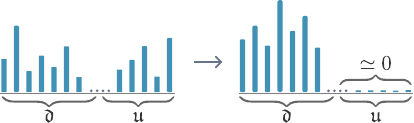}
\protect\caption{The schematic illustrates the preparation of a separated state in Markov jump processes. After finite-time operations, the final distribution is approximately separated, i.e., the total probability of finding the system in the undesired states belonging to the set $\mfr{u}$ is nearly zero.}\label{fig:SepStateMarkov}
\end{figure}

\subsection{Dynamics, error, and cost}
We consider a thermodynamic process of a discrete-state system, which has $d$ internal states.
The term ``internal states'' is context-dependent and can refer to a variety of configurations: for instance, mesostates in coarse-grained dynamics, energy eigenstates in quantum jump processes, configurations of chemical species in reaction networks, or discrete-system states in other physical scenarios.
The system is coupled to thermal reservoirs and may be subject to external controls, thus being driven out of equilibriums.
The system can be characterized by a time-dependent probability distribution $\ket{p_t}=[p_1(t),\dots,p_d(t)]^\top$, where $p_n(t)$ denotes the probability of finding the system in state $n$ at time $t$.
Assuming the system's dynamics is a Markov jump process, the time evolution of this probability distribution is described by the master equation,
\begin{equation}
	\ket{\dot{p}_t}=\msf{W}_t\ket{p_t},\label{eq:mas.eq}
\end{equation}
where the dot $\cdot$ denotes the time derivative and $\msf{W}_t=[w_{mn}(t)]$ is the time-dependent transition rate matrix satisfying the normalization conditions $\sum_{m=1}^dw_{mn}(t)=0$ for any $n\in\iset{d}$.
To guarantee the thermodynamical consistency, we assume the local detailed balance \cite{Seifert.2012.RPP}; that is, the log of the ratio of transition rates is related to the entropy change in the environment.
While we consider classical systems here, the generalization to quantum cases is straightforward and will be discussed in a subsequent section.

Now we introduce the notions of error and cost in the thermodynamic process.
The definition of error should depend on the task's purpose.
In this study, we focus on a general class of thermodynamic tasks wherein one tries to reduce the probability of observing the system in states belonging to a set $\mfr{u}\subset\iset{d}$ to zero.
In other words, the objective is to create a separated state with respect to the set $\mfr{u}$ of undesired states (Fig.~\ref{fig:SepStateMarkov}).
As demonstrated later, prominent examples include the cooling problem in the third law of thermodynamics, preparation of separated states, classical copying, and kinetic proofreading (Fig.~\ref{fig:Cover}).
It is noteworthy that there is an infinite number of ways to define the error even for a specific problem.
Nevertheless, to achieve a meaningful trade-off relation, the error should possibly take values in the range of $[0,+\infty)$.
More precisely, the error should be zero when the task is perfectly accomplished and be positive when no action and cost is involved.
Keeping this in mind, we define the error at time $t$ in the following manner:
\begin{equation}\label{eq:error.def}
	\varepsilon_t\coloneqq-[\ln p_{\mfr{u}}(t)]^{-1},
\end{equation}
where $p_{\mfr{u}}(t)=\sum_{n\in\mfr{u}}p_n(t)$ is the probability of finding the system in undesired states of the set $\mfr{u}$.
In information theory, $-\ln p$ for a given probability $p$ is known as ``surprisal,'' representing the amount of information gained when a specific event occurs, and is directly related to Shannon entropy. Hence, $\varepsilon_t$ can be interpreted as the reciprocal of surprisal, meaning that reducing $\varepsilon_t$ corresponds to increasing surprisal.
Intuitively, $\varepsilon_t$ quantifies how close the total probability supported by the $\mfr{u}$ space is to zero at time $t$.
By this definition, one can observe that $\varepsilon_0$ can be infinite if $p_{\mfr{u}}(0)=1$, and $\varepsilon_\tau$ approaches zero as the process becomes perfect at the final time [i.e., when $p_{\mfr{u}}(\tau)\to 0$].

The cost associated with any process can be mainly divided into two contributions: kinetic and thermodynamic.
A kinetic contribution is pertinent to how strong the system's activity can be and should be present even when the system is in equilibrium.
One of the feasible ways to properly capture such contributions is through the maximum escape rate from states, leading us to define the quantity
\begin{equation}\label{eq:kine.cont}
	\omega_t\coloneqq\max_{n\in\mfr{u}}\sum_{m\in\mfr{d}}w_{mn}(t).
\end{equation}
More precisely, $\omega_t$ is the maximum escape rate from one undesired state in the $\mfr{u}$ space to the $\mfr{d}$ space, quantifying the ability of changing from undesired states to desired states at time $t$ [Fig.~\ref{fig:Cost}(a)].
The kinetic nature of this quantity is further highlighted by the fact that $\omega_t$ corresponds to the maximal jump frequency from the $\mfr{u}$ space to the $\mfr{d}$ space, maximized over all possible system distributions.
On the other hand, a thermodynamic contribution reflects how far the system is driven from equilibriums and is related to irreversible dissipation.
Entropy production, which quantifies the degree of thermodynamic irreversibility, is commonly invoked for the examination of thermodynamic costs in nonequilibrium systems \cite{Landi.2021.RMP}.
In the framework of stochastic thermodynamics \cite{Seifert.2012.RPP}, the entropy production rate can be expressed as the sum of the product of the probability currents and the thermodynamic forces, given by
\begin{equation}\label{eq:thermo.cont}
	\sigma_t\coloneqq\sum_{m\neq n}w_{mn}(t)p_n(t)\ln\frac{w_{mn}(t)p_n(t)}{w_{nm}(t)p_m(t)}.
\end{equation}
The jump frequency between desired states in the set $\mfr{d}$ and other undesired states belonging to $\mfr{u}$ is relevant for reducing the probability $p_{\mfr{u}}(t)$ and can be calculated as
\begin{equation}\label{eq:act.cont}
	a_t\coloneqq\sum_{n\in \mfr{u},m\in \mfr{d}}[w_{mn}(t)p_n(t)+w_{nm}(t)p_m(t)].
\end{equation}
This quantity is also known as dynamical activity in the literature \cite{Maes.2020.PR} and imposes important constraints on speed and precision of dynamics \cite{Shiraishi.2018.PRL,Garrahan.2017.PRE,Terlizzi.2019.JPA,Hasegawa.2020.PRL,Vo.2022.JPA}.

\begin{figure}[t!]
\centering
\includegraphics[width=1.0\linewidth]{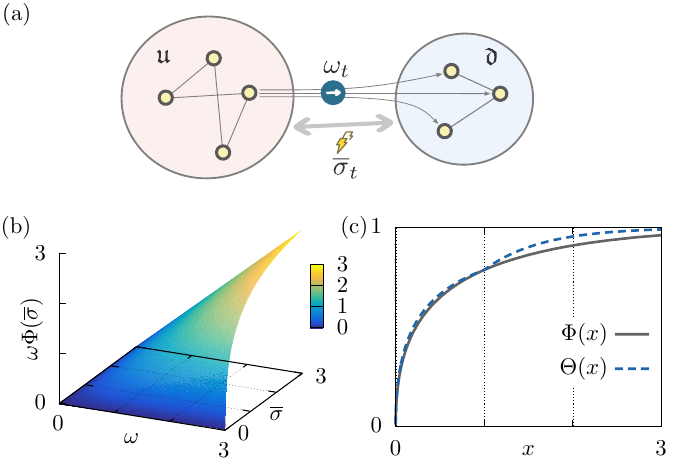}
\protect\caption{(a) Schematic illustration of the kinetic and thermodynamic contributions $\omega_t$ and $\overline{\sigma}_t$. (b) The behavior of the cost term as the function of the kinetic ($\omega$) and thermodynamic ($\overline{\sigma}$) contributions. The cost term vanishes when either contribution is zero and increases as a function of the other contribution when one is fixed. (c) Visualization of the function $\Phi(x)$ and its upper bound $\Theta(x)$, which are both monotonically increasing functions.}\label{fig:Cost}
\end{figure}

Using the kinetic and thermodynamic contributions introduced above, we define the thermokinetic cost associated with the thermodynamic process as
\begin{equation}\label{eq:cost.def}
	\mca{C}\coloneqq\omega\Phi(\overline{\sigma}),
\end{equation}
where
\begin{equation}\label{eq:ther.cost.def}
	\omega\coloneqq\max_{0\le t\le\tau}\omega_t,~\overline{\sigma}\coloneqq\ev{\overline{\sigma}_t}_\tau,~\overline{\sigma}_t\coloneqq\frac{\sigma_t}{a_t}.
\end{equation}
Here, $\ev{x_t}_\tau$ denotes the time average of an arbitrary time-dependent quantity $x_t$, defined as $\ev{x_t}_\tau\coloneqq\tau^{-1}\int_0^\tau\dd{t}x_t$. $\Phi$ is a monotonically increasing function, defined as $\Phi\coloneqq 1-\phi^{-1}\in[0,1)$, where $\phi^{-1}$ denotes the inverse function of $\phi(x)\coloneqq(x+1)^{-1}(x-1)\ln x$. Note that $\sigma_t$ is the total entropy production rate of the original dynamics \eqref{eq:mas.eq} and the dynamical activity $a_t$ represents the average number of jumps per unit time between $\mfr{u}$ and $\mfr{d}$ spaces. Therefore, $\overline{\sigma}$ quantifies the time average of entropy production per transition between these two spaces, being crucial in quantifying the thermodynamic cost. The cost $\mca{C}$ is defined as a dimensionless quantity with respect to time, as our aim is to clearly distinguish between the resources of time and thermokinetic contributions. Near equilibrium, where $\overline{\sigma}$ is small, $\Phi$ behaves approximately as $\Phi(\overline{\sigma})\approx\sqrt{2\overline{\sigma}}$. In general, $\Phi(x)\le\Theta(x)$, where $\Theta(x) \coloneqq 1 - e^{-c \max(\sqrt{x}, x)}$ is a monotonically increasing function with $c \approx 1.543$ \cite{fnt1} (see Proposition \ref{prop4} for the proof). The behavior of the cost term is quantitatively illustrated in Figs.~\ref{fig:Cost}(b) and \ref{fig:Cost}(c). In the following, we delve into the essence of $\mca{C}$ and provide further insights into its significance.

First, we explain why both thermodynamic and kinetic contributions are essential in defining the cost $\mca{C}$. It has been shown that probability distributions can be transformed with arbitrarily small entropy production \cite{Vu.2021.PRL2,Remlein.2021.PRE,Dechant.2022.JPA}; therefore, entropy production alone is insufficient to characterize state transformations, necessitating the inclusion of kinetic contributions \cite{Vu.2023.PRX}. Moreover, reaching new states within finite time inevitably drives the system out of equilibrium, resulting in unavoidable entropy production that contributes to the transformation cost. These facts clearly indicate that both contributions are crucial in characterizing the cost. Such an integration of thermodynamic and kinetic contributions also appears in other contexts. For instance, the relative fluctuation of currents was shown to be constrained by both thermodynamic and kinetic factors \cite{Horowitz.2020.NP,Vo.2022.JPA}.

Second, we examine what the cost $\mca{C}$ captures through its thermodynamic and kinetic contributions. As evident from its definition, $\mca{C}$ encapsulates the combined effects of frenesy and dissipation required to drive the system out of equilibrium and reach the desired final distribution within a finite time, expressed via the product $\omega\Phi(\overline{\sigma})$. This formulation clearly highlights the \emph{trade-off} interplay between these two factors: a reduction in one necessitates a compensatory increase in the other. In other words, both contributions cannot be simultaneously small when a substantial cost is required to achieve low error. Notably, such a nontrivial interplay between kinetic and thermodynamic factors also emerges in the context of fluctuation suppression \cite{Vo.2022.JPA}. Furthermore, when either $\omega$ or $\overline{\sigma}$ is fixed, the minimal value of the remaining factor needed to accomplish the task can be determined. From a quantitative standpoint, $\mca{C}$ embodies the intrinsic nature of cost---any increase in either $\omega$ or $\overline{\sigma}$ inevitably leads to a higher overall cost.
It is worth noting that the role of kinetics is both broad and significant: variations in the transition rates influence not only the maximum escape rate but also the entropy production per transition in a nontrivial manner. As a result, the kinetic and thermodynamic contributions are, in general, interdependent rather than independent. Nevertheless, since $\Phi \leq 1$, the kinetic contribution plays a decisive role: the divergence of the cost $\mca{C}$ necessarily entails the divergence of $\omega$.
To gain deeper insights, we establish the following hierarchical relation: 
\begin{equation}
	\mca{C}_1\le\mca{C}\le \mca{C}_2,
\end{equation}
where the lower and upper bounds of the cost are defined as
\begin{align}
	\mca{C}_1&\coloneqq \ev{\omega_t\Phi\qty(\overline{\sigma}_t)}_\tau,\\
	\mca{C}_2&\coloneqq \omega.
\end{align}
Notably, each term in the hierarchy ($\mca{C}_1 \le \mca{C} \le \mca{C}_2$) can serve as a quantifier of cost. While employing $\mca{C}_2$ leads to a looser bound, it offers a clearer interpretation, as $\mca{C}_2$ solely reflects the maximum escape rate, entirely neglecting thermodynamic contributions. In contrast, $\mca{C}_1$ incorporates both kinetic and thermodynamic factors, making it qualitatively similar to $\mca{C}$. Quantitatively, $\mca{C}_1$ represents the time-averaged instantaneous cost, thereby yielding a tighter bound. For precise analysis, $\mca{C}_1$ might be preferable to ensure tightness while accurately capturing the nature of the cost. Nonetheless, in practical applications, $\mca{C}$ or even $\mca{C}_2$ may be employed for simpler interpretation and estimation.

\section{Results}\label{sec:results}
\subsection{Time-cost-error trade-off relation}\label{sec:tradeoff}
With the quantifications of cost and error above, we are now ready to explain our central result. 
We rigorously prove that there exists a fundamental trade-off between time, cost, and error, which holds true for arbitrary times and protocols.
The trade-off is explicitly given by
\begin{equation}\label{eq:main.res}
	\tau\mca{C}\varepsilon_\tau\ge 1-\eta,
\end{equation}
where $\eta\coloneqq{\varepsilon_\tau}/{\varepsilon_0}$ is the relative error, which should vanish for a perfect process.
Alternatively, it can also be expressed in a more intuitive form of $\varepsilon_0\tau\mca{C}\ge\eta^{-1}-1$. Note that the lower bound can become negative when $\eta>1$, resulting in a trivial inequality. Therefore, it is more relevant to focus solely on meaningful processes where the error is eventually reduced (i.e., $\eta\le 1$).
The proof of the relation \eqref{eq:main.res} is presented in Appendix \ref{app:proof.main.res}.

Below, we discuss the crucial physics derived from this three-way trade-off relation.
When the initial error is zero, meaning $p_{\mfr{u}}(0)=0$, the task of thermodynamic operation is already complete.
Therefore, we focus on the scenario where the initial error $\varepsilon_0$ is nonzero, which can be finite or infinite.
The inequality implies that achieving a perfect thermodynamic operation, where $\varepsilon_\tau =0$, is impossible without either infinite operation time or infinite cost. 
This does not only qualitatively yield a no-go theorem for the general task of reducing the probability to zero but also provides a quantification onto how much the resource must be invested to achieve a predetermined error.
Moreover, the derived inequality is tight and can be saturated for generic thermodynamic processes (see Sec.~\ref{sec:demon} for numerical demonstration).
In addition, the quantum extensions of this result are obtained and discussed in Sec.~\ref{sec:quantum.gen}.

The three-way trade-off relation \eqref{eq:main.res} can be interpreted as a generalization of the third law of thermodynamics within Markovian dynamics. 
In the literature, the third law comprises two key statements. The first concerns the system's entropy at absolute zero temperature, known as Nernst's heat theorem \cite{Nernst.1906} and Planck's formulation \cite{Planck.1911}. The second statement is the {\it unattainability principle}, which asserts that bringing any system to its ground state requires an infinite number of thermodynamic operations or infinite time \cite{Allahverdyan.2011.PRE,Levy.2012.PRE,Silva.2016.PRE,Masanes.2017.NC,Wilming.2017.PRX,Scharlau.2018.Q,Clivaz.2019.PRL,Buffoni.2022.PRL,Taranto.2023.PRXQ}. Our focus is on the latter statement. 
The three-way trade-off relation allows us to generalize the unattainability principle to: ``{\it bringing any system to a separated state (determined by $\mfr{u}$ and $\mfr{d}$ spaces) necessitates either an infinite cost $\mca{C}$ or infinite time $\tau$}.'' This relation quantifies and reinforces the unattainability principle in a broader context.

The derivation of the relation \eqref{eq:main.res} assumes microscopic reversibility, meaning all transitions are bidirectional. However, when the dynamics involves unidirectional transitions (i.e., certain transitions have no reverse counterparts), the quantity $\overline{\sigma}$ diverges because the entropy production rate defined in Eq.~\eqref{eq:thermo.cont} becomes infinite. As a result, the cost is governed solely by the kinetic factor, with no thermodynamic contribution (i.e., $\mca{C}=\omega$). To obtain a more refined bound in the presence of unidirectional transitions, we generalize the cost by introducing $\mca{C}^{\rm uni}$, which separates bidirectional and unidirectional contributions:
\begin{equation}\label{eq:main.res.uni}
	\mca{C}^{\rm uni}\coloneqq\omega^{b}\Phi\qty(\overline{\sigma}^{b})+\omega^{u}.
\end{equation}
Here, superscripts $b$ and $u$ indicate that the corresponding quantity is calculated using the bidirectional and unidirectional parts, respectively.
As shown, $\mca{C}^{\rm uni}$ comprises two terms: the first term captures the cost associated with the bidirectional transitions, identical in form to Eq.~\eqref{eq:cost.def}, while the second term accounts for the purely kinetic cost arising from the unidirectional transitions.
With this generalized cost, the relation \eqref{eq:main.res} can be analogously derived; the proof is presented in Appendix \ref{app:proof.main.res.uni}.
Note that this generalized cost reduces exactly to the conventional one \eqref{eq:cost.def} in the absence of unidirectional transitions (i.e., when $\omega^u=0$).

We have several additional remarks on the three-way trade-off relation. First, the trade-off relation can also be derived using the nonadiabatic entropy production rate as part of the thermodynamic cost, instead of the total entropy production rate.
It has been shown that the total entropy production, which defines the cost, can be decomposed into adiabatic and nonadiabatic contributions \cite{Hatano.2001.PRL,Esposito.2010.PRL}. Roughly speaking, adiabatic entropy production relates to the dissipation required to maintain the stationary state. Conversely, the nonadiabatic component, which vanishes in the stationary state, characterizes the additional dissipation needed for the transition between stationary states. Notably, only the nonadiabatic entropy production is responsible for state transformations \cite{Shiraishi.2018.PRL,Vo.2020.PRE}. Therefore, it is reasonable to anticipate a similar trade-off relation involving the nonadiabatic cost, and we demonstrate that this is indeed the case.
The derivation is detailed in Appendix \ref{app:nonad.tradeoff}.

Second, for thermodynamic processes driven by time-independent protocols, a more compact trade-off solely in terms of entropy production can be established (see Appendix \ref{app:relax.tradeoff}). Furthermore, we can demonstrate that excess entropy production limits how much the error can be reduced. Such a relationship can have significant implications for understanding the connection between thermodynamic cost and functional performance. For instance, in a proofreading mechanism, certain kinetic parameters may be abruptly quenched to adapt to changes in the surrounding environment, leading the system to subsequently relax toward another stationary state. The achievable error in this new stationary state is then constrained by the excess cost incurred during the relaxation process.

Third, we clarify the qualitative differences between our result and existing trade-off relations. One of the most prominent results is the thermodynamic uncertainty relation (TUR) \cite{Barato.2015.PRL,Gingrich.2016.PRL,Horowitz.2017.PRE}, which takes the form
\begin{equation}\label{eq:TUR}
	\tau\overline{\Sigma}\varepsilon_\jmath\ge 2
\end{equation}
for stationary Markov jump processes. Here, $\overline{\Sigma}$ represents the entropy production rate and $\varepsilon_\jmath\coloneqq\mvar[\jmath]/\ev{\jmath}^2$ is the relative fluctuation of current $\jmath$.
Interpreting $\varepsilon_\jmath$ as the current error, the inequality \eqref{eq:TUR} can be seen as a trade-off between time, cost, and error.
However, this error pertains to the fluctuation of trajectory currents, which is entirely unrelated to the problem of reducing probability considered in this study.
Another relevant result is the thermodynamic speed limit (TSL), formulated as \cite{Dechant.2022.JPA,Lee.2022.PRL,Vu.2023.PRX}:
\begin{equation}\label{eq:tsl}
	\tau f(\overline{\Sigma},\overline{\mca{A}})\ge\mca{W}(p_0,p_\tau),
\end{equation}
where $\overline{\mca{A}}$ is the time-averaged dynamical activity, $\mca{W}$ denotes the Wasserstein distance between probability distributions, and $f$ is a function whose explicit form is omitted.
The TSL can be interpreted as a trade-off between time and cost in state transformations, indicating that these resources cannot simultaneously be small when changing states. 
Unlike our focus on the third law, both the TUR and TSL serve as quantitative refinements of the second law of thermodynamics.
Crucially, these relations cannot establish the third law of thermodynamics, as the inequalities \eqref{eq:TUR} and \eqref{eq:tsl} do not rule out the possibility of reaching absolute zero within finite time and cost \cite{fnt2}. Therefore, our result is conceptually and qualitatively distinct from the existing relations.

In what follows, we demonstrate several essential applications and quantum extensions of the central relation \eqref{eq:main.res}.

\subsection{More on the unattainability principle in the third law of thermodynamics}\label{sec:uptl}
We now scrutinize the connection between our result and the third law of thermodynamics. 
As discussed in the previous subsection, the trade-off relation encompasses the concept of the {\it unattainability principle} in the third law of thermodynamics. This principle is intrinsically linked to the fundamental tasks such as creating pure states and erasing information, both of which are critical in computation.
Although the impossibility of these processes has been proved in contexts involving finite-size reservoirs \cite{Franco.2013.SR,Wu.2013.SR}, it remains crucial to determine the extent to which one can approach the desired states with limited resources. In this direction, finite bounds on the achievable temperature have been identified, defined in terms of resources such as the reservoir volume and the worst-case work injected \cite{Masanes.2017.NC,Clivaz.2019.PRL}. In the present study, we advance this direction by elucidating the relationship between the achievable error and the resources of {\it time} and {\it thermokinetic cost}.
While we investigate the classical case here, the same result can be analogously obtained in quantum scenarios, which will be discussed in the subsequent subsection.

First, we consider the general problem of creating separated distributions that have only one desired state (i.e., $\mfr{d}=\{k\}$ for some state $k$).
Such distributions are also known as {\it degenerate} (or one-point) distributions, with support only at a single state.
This class of distributions trivially includes ground states.
Within a duration of $\tau$, the objective is to drive the system to the target degenerate distribution $\ket{g}$ as close as possible, irrespective of the initial distribution.
Here, the degenerate distribution $\ket{g}$ is specified as $g_n=\delta_{nk}$.
The error can be appropriately evaluated by setting $\mfr{u}=\iset{d}\setminus\{k\}$ in Eq.~\eqref{eq:error.def}.
Evidently, the error $\varepsilon_\tau$ should vanish as the final distribution approaches the degenerate distribution [i.e., when $p_k(\tau)\to 1$].
The trade-off relation \eqref{eq:main.res} then provides a lower bound on the error in terms of time and thermokinetic cost as
\begin{equation}\label{eq:third.law.res}
	\varepsilon_\tau\ge\frac{1}{\tau\mca{C}+\varepsilon_0^{-1}}.
\end{equation}
Essentially, this inequality implies that infinite resources of time and cost are required to perfectly generate degenerate distributions.
Notably, the relation \eqref{eq:third.law.res} extends beyond merely affirming the no-go theorem regarding the creation of degenerate distributions; it quantitatively determines the extent of resources required to attain a specific level of precision.

\begin{figure}[t!]
\centering
\includegraphics[width=1.0\linewidth]{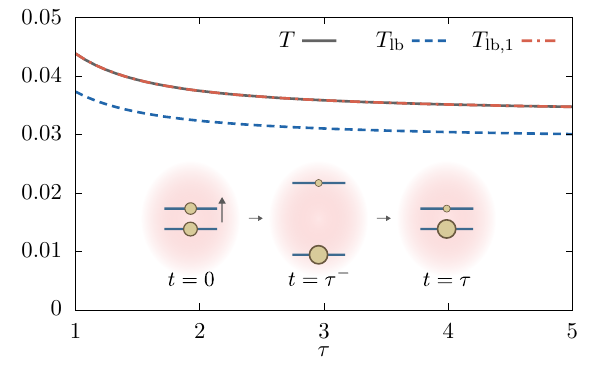}
\protect\caption{Numerical illustration of the bound \eqref{eq:temp.lb} in cooling a classical bit. The energy gap is modulated as $\Delta_g(t)=0.1+2.9t/\tau$, with transition rates satisfying $w_{12}(t)+w_{21}(t)=5$. The duration $\tau$ is varied, while the bath inverse temperature is fixed at $\beta=1$. The effective temperature $T$, shown as the solid line, is consistently bounded from below by both $T_{\rm lb}$ and $T_{\rm{lb},1}$, represented by the dashed and dash-dotted lines, respectively.}\label{fig:TempRes}
\end{figure}

Next, we address the problem of cooling to the ground state and show that the relation \eqref{eq:main.res} can be reformulated as a lower bound on the attainable temperature in terms of time and cost.
Note that while temperature is well defined for systems in thermal equilibrium, challenges present when moving onto nonequilibrium systems due to the breakdown of the equivalence of statistical ensembles.
In the literature, several notions of effective temperature have been proposed \cite{Vazquez.2003.RPP,Bartosik.2023.PRL}.
We herein consider the most common approach, which assigns an effective temperature to a nonequilibrium state using temperature of a Gibbs thermal state with the same energy.
That is, the effective temperature $T$ associated with the nonequilibrium distribution $\ket{p_\tau}$ is determined such that the following equality is fulfilled:
\begin{equation}\label{eq:eff.temp.def}
	\braket{E}{p_\tau}=\braket{E}{\pi(T,E)}.
\end{equation}
Here, $\ket{E}=[E_1,\dots,E_d]^\top$ denotes the vector of energy levels and $\ket{\pi(T,E)}$ is the equilibrium distribution with temperature $T$ and energy levels $\ket{E}$.
Without loss of generality, we can assume that $0\le E_1=\dots=E_\kappa<E_{\kappa+1}\le\dots\le E_d$. Here, $\kappa$ is the degree of degeneracy in the ground state.
By utilizing Eq.~\eqref{eq:main.res}, we can derive that the achievable temperature is lower bounded by the resources of time and cost as (see Appendix \ref{app:proof.temp.lb} for the proof)
\begin{equation}\label{eq:temp.lb}
	T\ge\frac{\Delta_g}{\ln\qty[\frac{(d-\kappa)\Delta_b}{\kappa\Delta_g}e^{\tau\mca{C}+\varepsilon_0^{-1}}-\frac{1}{\kappa}]},
\end{equation}
where $\Delta_g\coloneqq E_{\kappa+1}-E_1$ and $\Delta_b\coloneqq E_d-E_1$ are the energy gap and energy bandwidth, respectively.
More precisely, $\Delta_g$ denotes the energy gap between the ground state and the first excited state, whereas $\Delta_b$ represents the energy difference between the lowest and highest energy levels.
This inequality not only manifests the third law of thermodynamics but also establishes a fundamental limit on the minimum temperature achievable with a finite amount of resources.
The quantum version of this result is presented in Appendix \ref{app:qtemp.lb}.

We demonstrate the tightness of the bound \eqref{eq:temp.lb} using a classical bit coupled to a thermal bath. Initially, the bit is in an equilibrium state, and its energy gap is gradually increased, inducing cooling toward the ground state. At the final time, the energy gap is instantaneously reset to its initial value. In this scenario, since $\kappa=1$, $d=2$, and $\Delta_g=\Delta_b$, the bound \eqref{eq:temp.lb} simplifies to
\begin{equation}
	T\ge\frac{\Delta_g}{\ln\qty(e^{\tau\mca{C}+\varepsilon_0^{-1}}-1)}\eqqcolon T_{\rm lb}.
\end{equation}
For comparison, we also define the lower bound $T_{\rm{lb},1}$, where the cost $\mca{C}$ is replaced by $\mca{C}_1$. In general, the hierarchy $T\ge T_{\rm{lb}, 1}\ge T_{\rm lb}$ holds. We vary the operational time and compute the effective temperature of the bit at the final time using Eq.~\eqref{eq:eff.temp.def}. As depicted in Fig.~\ref{fig:TempRes}, the effective temperature $T$ is tightly bounded from below by $T_{\rm lb}$, accurately capturing the temperature's decreasing trend as a function of the time duration. Notably, the lower bound $T_{\rm{lb},1}$ is saturated for all times.

\subsection{Limitation in preparation of separated states}\label{sec:lpss}
As demonstrated above, the trade-off relation \eqref{eq:main.res} provides an essential implication for the preparation of degenerate distributions or ground states.
Here we show that it can address a more general problem of preparing a separated state from a non-separated state (Fig.~\ref{fig:SepStateMarkov}).
In this context, the relation \eqref{eq:main.res} immediately implies the impossibility of bringing a Markovian system from a non-separated state to a separated state.
In addition to this qualitative implication, it provides a quantitative bound on the final state's separation.
To measure the degree of separation of the final distribution, we examine the probability $p_{\mfr{u}}(\tau)$, with the goal of minimizing it.
Then, the following lower bound on $p_{\mfr{u}}(\tau)$ in terms of time and cost can be derived from our central result:
\begin{equation}
	p_{\mfr{u}}(\tau)\ge p_{\mfr{u}}(0)e^{-\tau\mca{C}}.
\end{equation}

\begin{figure}[t!]
\centering
\includegraphics[width=1.0\linewidth]{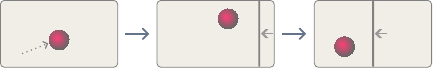}
\protect\caption{An example of preparing a separated state using a particle in a box. This scheme reduces the accessible state space during the operation, and therefore, it is not included in our setup.}\label{fig:SepStateBox}
\end{figure}

It is worth noting that a separated state can be prepared from a non-separated state without difficulty if reduction of the state space is permitted.
For example, consider a particle inside a box surrounded by a thermal reservoir at a fixed temperature (see Fig.~\ref{fig:SepStateBox} for illustration).
Initially, the particle is in a fully supported distribution, meaning it can occupy any position within the box.
Consider a situation where one aims to create a separated state wherein the particle is confined only to the left half of the box.
This target state can then be easily prepared by inserting a partition at the right end of the box and moving it to the center.
Note that during operation, the protocol reduces the state space that the particle can occupy.
That is, the space created by the partition on the right side of the box is inaccessible to the particle.
The crucial difference between this kind of protocol and our setup lies in whether the reduction of state space is allowed or not.
For Markov jump processes, the reduction of state space cannot be achieved by simply setting the transition rates to zero, as the system can be trapped at some state with a nonzero probability.

\subsection{No-go theorem of exact classical copying}\label{sec:copying}
We consider the problem of classical copying from the standpoint of stochastic thermodynamics.
The composite system has three subsystems: a source whose distribution is copied, a machine that performs copying, and a target into which distribution is copied.
Each subsystem has a finite space of discrete states, denoted by $\Omega^{x}$ for $x\in\{\msf{s},\msf{m},\msf{t}\}$.
Hereafter, the superscripts $\msf{s}$, $\msf{m}$, and $\msf{t}$ indicate the source, machine, and target subsystems, respectively.
Note that the state spaces of the source and target must be identical to guarantee accurate copying (i.e., $\Omega^{\msf{s}}\equiv\Omega^{\msf{t}}$).
The probability distribution of the composite system at time $t$ is denoted by $p_t(n,m,o)$ and evolves according to the master equation \eqref{eq:mas.eq}.
The initial distribution is of a product form
\begin{equation}
	p_0(n,m,o)=p_0^{\msf{s}}(n)p_0^{\msf{m}}(m)p_0^{\msf{t}}(o),
\end{equation}
where the initial distribution $\ket{p_0^{\msf{s}}}$ of the source can be arbitrary and those of the machine and target are some fixed distributions irrespective of $\ket{p_0^{\msf{s}}}$.
The final distribution may be non-factorizable, and we are primarily interested in the marginal distribution of the target. 
The copying process is deemed perfect only if $\ket{p_\tau^{\msf{t}}}=\ket{p_0^{\msf{s}}}$, which is determined by the following calculation of the marginal distribution,
\begin{equation}
	p_\tau^{\msf{t}}(o)=\sum_{n\in\Omega^{\msf{s}},m\in\Omega^{\msf{m}}}p_\tau(n,m,o).
\end{equation}
We emphasize that this setup is less restrictive than the problem of classical {\it cloning} \cite{Daffertshofer.2002.PRL}, where there is an additional requirement that the marginal distribution of the source remains unchanged (i.e., $\ket{p_\tau^{\msf{s}}}=\ket{p_0^{\msf{s}}}$).
For this general setup, we derive the no-go theorem for classical copying from the relation \eqref{eq:main.res} (see Fig.~\ref{fig:Copy} for illustration).

\begin{figure}[t!]
\centering
\includegraphics[width=1.0\linewidth]{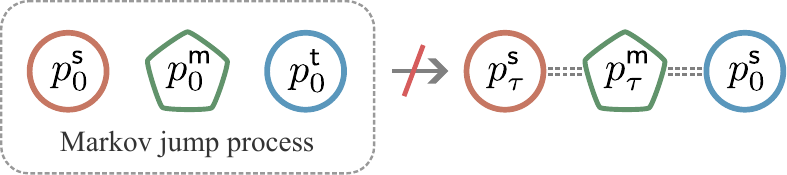}
\protect\caption{Impossibility of exact classical copying. The composite system, which includes a source, a machine, and a target, is initially in a product state. The final state may be a correlated state, wherein information from the source is expected to be replicated in the target's marginal distribution. The no-go theorem asserts that it is impossible to perfectly copy information from the source to the target within a finite amount of time and thermokinetic cost through a thermodynamic Markov jump process.}\label{fig:Copy}
\end{figure}

To this end, it suffices to show that copying cannot be perfect for an instance of the source distribution.
Let $o_*=\arg\max_{o\in\Omega^{\msf{t}}}p_0^{\msf{t}}(o)$ and consider an initial distribution $\ket{p_0^{\msf{s}}}$ with $p_0^{\msf{s}}(o_*)=0$, which we want to copy.
Then, $p_\tau^{\msf{t}}(o_*)=0$ should be fulfilled to achieve a perfect copying.
By choosing the set $\mfr{u}=\{(n,m,o_*)|n\in\Omega^{\msf{s}},m\in\Omega^{\msf{m}}\}$, the error reads
\begin{equation}
	\varepsilon_t=-[\ln p_t^{\msf{t}}(o_*)]^{-1}.
\end{equation}
Note that $\varepsilon_0>0$, and a perfect copying implies $\varepsilon_\tau=0$.
Applying the trade-off relation \eqref{eq:main.res} readily yields the no-go theorem for exact classical copying, since $\varepsilon_\tau=0$ necessitates infinite resources of time and cost.

\subsubsection{Hardness of copying}
In addition to the no-go theorem for exact copying, it is important to discuss the inherent difficulty of achieving exact copying in practice. For simplicity, we consider a copying process involving two classical bits \cite{Mulder.2025.PRR}, $\msf{X}$ and $\msf{Y}$, where the goal is to copy the value of bit $\msf{X}$ onto bit $\msf{Y}$. These two bits form a bipartite system \cite{Horowitz.2014.PRX}, with the key assumption that bit $\msf{X}$ remains unaffected by the state of bit $\msf{Y}$. The copying error can be quantified by the probability $p_e(t)$, which represents the likelihood that bits $\msf{X}$ and $\msf{Y}$ hold different values:
\begin{equation}
p_e(t) \coloneqq p_t(0,1) + p_t(1,0),
\end{equation}
where $p_t(x,y)$ denotes the joint probability of finding bits $\msf{X}$ and $\msf{Y}$ in states $x$ and $y$, respectively, at time $t$.
Perfect copying corresponds to the condition $p_e(\tau) = 0$. However, we can quantitatively demonstrate the difficulty of achieving this. To this end, we define the undesired and desired spaces as $\mfr{u}\coloneqq\{(0,1),(1,0)\}$ and $\mfr{d}\coloneqq\{(0,0),(1,1)\}$. Applying the trade-off relation \eqref{eq:main.res}, we obtain a quantitative bound that highlights the difficulty of achieving perfect copying within finite resources of time and cost:
\begin{equation}\label{eq:copy.err.lb}
p_e(\tau) \ge p_e(0) e^{-\tau\mca{C}}.
\end{equation}
This bound shows that while the error can decrease exponentially with increasing time $\tau$ and cost $\mca{C}$, it remains strictly positive as long as these resources are finite.

The difficulty of copying can also be examined from the perspective of information theory. To quantify how much information is transferred from $\msf{X}$ to $\msf{Y}$, it is convenient to use the mutual information, defined as
\begin{equation}
I_t\coloneqq\sum_{x,y}p_t(x,y)\ln\frac{p_t(x,y)}{p_t(x)p_t(y)}.
\end{equation}
Physically, $I_t$ measures how much information about $\msf{X}$ is contained in $\msf{Y}$. Since the bits $\msf{X}$ and $\msf{Y}$ are initially uncorrelated (i.e., in a product state), the initial mutual information is zero, $I_0 = 0$.
Because $I_t \le S_t(\msf{X}) = S_0(\msf{X})$, the maximum mutual information achievable is $I_{\rm max} \coloneqq S_0(\msf{X})$, where $S_t(\msf{X}) \coloneqq -\sum_x p_t(x) \ln p_t(x)$ is the Shannon entropy of bit $\msf{X}$ at time $t$. For simplicity, we assume $p_e(\tau) \le 1/2$, as we are primarily interested in processes that reduce the error probability toward zero \cite{fnt3}. Perfect copying within a finite time $\tau$ would require achieving $I_\tau = I_{\rm max}$. However, by combining Eq.~\eqref{eq:copy.err.lb} with standard information-theoretic inequalities, we reveal the inherent difficulty of this task through the following bound:
\begin{equation}\label{eq:copy.mi.lb}
I_\tau\le I_{\rm max}-(4\ln 2)h(I_{\rm max})p_e(0)e^{-\tau\mca{C}}.
\end{equation}
Here, $h\in[0,1/2]$ denotes the inverse function of $-p\ln p-(1-p)\ln(1-p)$, defined for $p\in[0,1]$. This inequality demonstrates that achieving maximum mutual information is fundamentally limited when only finite time and cost resources are available.
The detailed derivation of Eq.~\eqref{eq:copy.mi.lb} is provided in Appendix \ref{app:copy.mi.lb}.

\subsection{Generalization to quantum cases}\label{sec:quantum.gen}
Here, we show that the trade-off relation \eqref{eq:main.res} not only applies to classical cases but can also be generalized to quantum scenarios. Specifically, we consider for both Markovian dynamics with infinite-size reservoirs and non-Markovian dynamics with finite-size reservoirs. In quantum systems, preparing pure states and ground states are crucial tasks, particularly in quantum computation. To address these tasks, we derive quantum generalizations of the trade-off relation \eqref{eq:main.res}, revealing the fundamental limitations inherent in these processes.

\subsubsection{Markovian dynamics}
We consider a $d$-dimensional quantum system that is weakly coupled to infinite-size reservoirs.
The time evolution of the density matrix can be described by the the Gorini-Kossakowski-Sudarshan-Lindblad (GKSL) equation \cite{Gorini.1976.JMP,Lindblad.1976.CMP}:
\begin{equation}\label{eq:GKSL.eq}
	\dot\varrho_t=-i[H_t,\varrho_t]+\sum_kD[L_k(t)]\varrho_t,
\end{equation}
where $H_t$ is the controllable Hamiltonian, $\{L_k(t)\}$ are jump operators, and $D[L]\varrho\coloneqq L\varrho L^\dagger-\{L^\dagger L,\varrho\}/2$ is the dissipator.
To guarantee the thermodynamical consistency of the dynamics, we assume the local detailed balance \cite{Horowitz.2013.NJP,Manzano.2022.QS}.
That is, the jump operators come in pairs $(k,k')$ such that $L_k=e^{\Delta s_k/2}L_{k'}^\dagger$, where $\Delta s_k$ denotes the environmental entropy change associated with $k$th jump.
Note that it is possible that $k'=k$, which implies $\Delta s_k=0$ and the Hermiticity of jump operator $L_k$.

In analogy with the classical case, we introduce the kinetic and thermodynamic contributions to define the cost.
The thermodynamic contribution can be defined similarly to the classical case. 
According to the framework of quantum thermodynamics, entropy production is defined as the sum of entropy changes in both the system and the environment; its rate is given by
\begin{equation}
	\sigma_t=-\tr{\dot\varrho_t\ln\varrho_t}+\sum_k\Delta s_k(t)\tr{L_k(t)\varrho_tL_k(t)^\dagger}.
\end{equation}
Meanwhile, dynamical activity is quantified by the average number of quantum jumps that have occurred over time, and its rate is equal to
\begin{equation}\label{eq:qu.dynact}
	a_t=\sum_k\tr{L_k(t)\varrho_tL_k(t)^\dagger}.
\end{equation}
Let $\varrho_t=\sum_np_n(t)\dyad{n_t}$ be the spectral decomposition of the density matrix, and define the transition rates $w_{mn}^k(t)\coloneqq|\mel{m_t}{L_k(t)}{n_t}|^2$ between the eigenbasis.
Then, the rates of entropy production and dynamical activity can be decomposed as $\sigma_t=(1/2)\sum_n\sigma_n(t)$ and $a_t=(1/2)\sum_na_n(t)$ \cite{Vu.2024.PRA}, where the partial terms are given by
\begin{align}
	\sigma_n&\coloneqq\sum_{k,m}(w_{mn}^kp_n-w_{nm}^{k'}p_m)\ln\frac{w_{mn}^kp_n}{w_{nm}^{k'}p_m},\\
	a_n&\coloneqq\sum_{k,m}(w_{mn}^kp_n+w_{nm}^{k'}p_m).
\end{align}
The thermodynamic contribution, characterizing the average dissipation per single jump, can be defined as
\begin{equation}
	\overline{\sigma}_t\coloneqq\max_{n\in\iset{d}}\frac{\sigma_n(t)}{a_n(t)}.
\end{equation}

The kinetic contribution reflects how rapidly the system's state can evolve due to interactions with the environment. This aspect can be captured by the dynamical activity, defined in Eq.~\eqref{eq:qu.dynact}, which is a state-dependent quantity. To characterize the system's potential for activity independently of its state, we introduce the following state-independent kinetic contribution in terms of the jump operators:
\begin{equation}\label{eq:qu.kinetic}
	\omega_t\coloneqq\|\sum_kL_k(t)^\dagger L_k(t)\|,
\end{equation}
where $\|.\|$ denotes the operator norm.
This quantity provides an upper bound on the dynamical activity (i.e., $\omega_t \ge a_t$) \cite{fnt4}, and thus represents the maximum jump frequency of the system across all possible quantum states. Physically, $\omega_t$ is related to the coupling strength of the dissipators and can serve as a measure of the thermalization strength in a thermodynamic process. Notably, in the classical limit, $\omega_t$ reduces exactly to the maximum escape rate among states.
Using the thermodynamic and kinetic contributions provided above, the cost $\mca{C}$ is defined in accordance with Eq.~\eqref{eq:cost.def}, mirroring the approach used in the classical case.

Let $\{p_n^{\le}(t)\}$ be increasing eigenvalues of $\varrho_t$, i.e., $p_n^{\le}(t)=p_{i_n}(t)$, where $\{i_1,\dots,i_d\}$ is a permutation of $\iset{d}$ such that $p_{i_1}(t)\le\dots\le p_{i_d}(t)$.
Then, the error with respect to a set $\mfr{u}\subset\iset{d}$ can be defined as
\begin{equation}
	\varepsilon_t\coloneqq-[\ln\sum_{n\in \mfr{u}}p_n^{\le}(t)]^{-1}.
\end{equation}
For example, $\mfr{u}=\iset{d-1}$ indicates the preparation of pure states, while $\mfr{u}=\iset{k}$ corresponds to the general task of reducing the rank of the density matrix to $d-k$ or less.
Given these definitions of cost and error, we can prove that the trade-off relation \eqref{eq:main.res} remains valid for arbitrary protocols, indicating the universality of the result in both classical and quantum domains.
The proof is presented in Appendix \ref{app:proof.quantum.markov}.
It is noteworthy that, for the cooling problem, a lower bound on the achievable temperature in terms of time and cost, similar to Eq.~\eqref{eq:temp.lb}, can be analogously derived (see Appendix \ref{app:qtemp.lb} for the derivation).

\subsubsection{Non-Markovian dynamics}
Next, we consider a case where the system is coupled to a finite-dimensional reservoir.
The Hamiltonian of the composite system is expressed as $H_t=H_S(t)+H_{SR}(t)+H_R$, representing the sum of the Hamiltonians for the system, the interaction, and the reservoir, respectively.
Note that both $H_S$ and $H_{SR}$ can be time-dependent, whereas $H_R$ remains fixed.
The initial state of the composite system is of a product form $\varrho_0=\varrho_S\otimes\pi_R$, where $\pi_R=e^{-\beta H_R}/\tr e^{-\beta H_R}$ is the Gibbs thermal state at inverse temperature $\beta$.
The composite system undergoes a unitary transformation and its final state reads $\varrho_\tau=U(\varrho_S\otimes\pi_R)U^\dagger$, where $U\coloneqq\mca{T}\exp\qty(-i\int_0^\tau\dd{t}H_t)$ is the unitary operator and $\mca{T}$ denotes the time-ordering operator. 
It was shown that irreversible entropy production can be expressed as a correlation between the system and the reservoir as \cite{Esposito.2010.NJP}
\begin{equation}\label{eq:ent.prod.nonM}
	\Sigma_\tau=\mca{D}(\varrho_\tau\|\varrho_S(\tau)\otimes\pi_R),
\end{equation}
where $\varrho_S(t)=\tr_R\varrho_t$ and $\mca{D}(\varrho\|\sigma)\coloneqq\tr{\varrho(\ln\varrho-\ln\sigma)}$ is the quantum relative entropy of state $\varrho$ with respect to state $\sigma$.

We introduce the cost for general non-Markovian dynamics. It is important to highlight that non-Markovian dynamics pose substantial technical challenges. In particular, two key distinctions from the Markovian case arise: (i) the entropy production rate can become negative due to information back-flow from the reservoir to the system, and (ii) the structure of quantum jumps in non-Markovian dynamics is far more elusive compared to that in Markovian dynamics, which obstructs a clear-cut definition of the kinetic contribution. These facts indicate that the approach developed in the Markovian context is not applicable in this case. As a consequence, it is naturally expected that the conventional form of the cost should be modified appropriately. To address this issue, we alternatively consider the cost defined in terms of entropy production as follows:
\begin{equation}\label{eq:nonMarkov.cost}
	\mca{C}\coloneqq\tau^{-1}\Psi\qty(\lambda^{-1}\Sigma_\tau),
\end{equation}
where $\lambda$ represents the smallest eigenvalues of the initial state $\varrho_0$ of the composite system, and $\Psi$ is the inverse function of $\psi(x)$ defined over $[0,+\infty]$ as
\begin{equation}\label{eq:psi.func.def}
	\psi(x)\coloneqq\min_{p\in[0,1]}\qty(x+\frac{1-p}{p}\ln\frac{1-p}{1-pe^{-x}})\ge 0.
\end{equation}
Note that both $\Psi$ and $\psi$ are monotonically increasing functions, and a simple lower bound for $\psi(x)$ can be derived (see Proposition \ref{prop3} in Appendix \ref{app:useful.ines} for the proof):
\begin{equation}
	\psi(x)\ge\frac{2x}{1+e^{3/x}}.
\end{equation}
Since $\Psi(x)\approx x$ for $x\gg 1$, the cost $\mca{C}$ is proportional to $\lambda^{-1}\overline{\Sigma}$, where $\overline{\Sigma}\coloneqq\tau^{-1}\Sigma_\tau$ is the time-averaged entropy production.
It is worth noting that, although the definition of cost differs from that in the Markovian case, it essentially captures both thermodynamic and kinetic contributions. Specifically, the kinetic aspect, such as the coupling strength between the system and the reservoir, is implicitly incorporated into the cost defined in Eq.~\eqref{eq:nonMarkov.cost} through $\Sigma_\tau$. This is justified by the fact that the magnitude of $\Sigma_\tau$, which characterizes the correlation between the system and the reservoir as in Eq.~\eqref{eq:ent.prod.nonM}, is strongly dependent on the coupling strength.

We consider the task of reducing the smallest eigenvalue of an initial mixed state $\varrho_S$ of the system (or equivalently decreasing its rank to below $d$).
Previous studies showed that it is impossible to accomplish the exact reduction of the rank \cite{Wu.2013.SR,Ticozzi.2014.SR,Reeb.2014.NJP}.
It is thus appropriate to investigate the achievable error $\varepsilon_t\coloneqq-[\ln\lambda_S(t)]^{-1}$, where $\lambda_S(t)$ denotes the smallest eigenvalue of the system's density matrix $\varrho_S(t)$.
With the definition of this error and the cost \eqref{eq:nonMarkov.cost}, we can derive the trade-off relation \eqref{eq:main.res}, which can be expressed in the following form:
\begin{equation}\label{eq:main.res.nonMarkov}
	\tau\mca{C}\varepsilon_\tau=\Psi(\lambda^{-1}\Sigma_\tau)\varepsilon_\tau\ge 1-\eta,
\end{equation}
whose proof is presented in Appendix \ref{app:proof.quantum.gen}.
Crucially, this relation indicates that dissipation is unavoidable in order to decrease the smallest eigenvalue.

We discuss the relevance of this result in light of previous findings.
First, the result can be applied to establish a lower bound on heat dissipation $\Delta Q$ as follows:
\begin{equation}
	\beta\Delta Q\ge -\Delta S+\max\qty{0,\lambda\psi(\varepsilon_\tau^{-1}-\varepsilon_0^{-1})}.
\end{equation}
This bound provides a refinement for the second law of thermodynamics, suggesting that heat dissipation approaches infinity as the quantum system is cooled down to the ground state (i.e., when $\varepsilon_\tau\to 0$).
Second, it was demonstrated that the initial state $\varrho_S$ can be transformed into an arbitrary final state $\varrho_S'$ with non-decreasing rank using arbitrarily small entropy production \cite{Reeb.2014.NJP}.
However, the dimension of the reservoir must be large for such protocols, resulting in a tendency for $\lambda$ to vanish.
Consequently, this makes the relation \eqref{eq:main.res.nonMarkov} consistent with the established statement.
Last, we briefly mention a relevant result reported in Ref.~\cite{Reeb.2014.NJP}, given by
\begin{equation}
	(2\beta\|H_R\|)\varepsilon_\tau\ge 1-\eta.
\end{equation}
This inequality characterizes a trade-off between the error and the maximum energy level of the reservoir. 
Conversely, our result illustrates a trade-off between the error, time, and dissipation, indicating that these findings can be viewed as complementary relations.

\begin{figure*}[t!]
\centering
\includegraphics[width=1.0\linewidth]{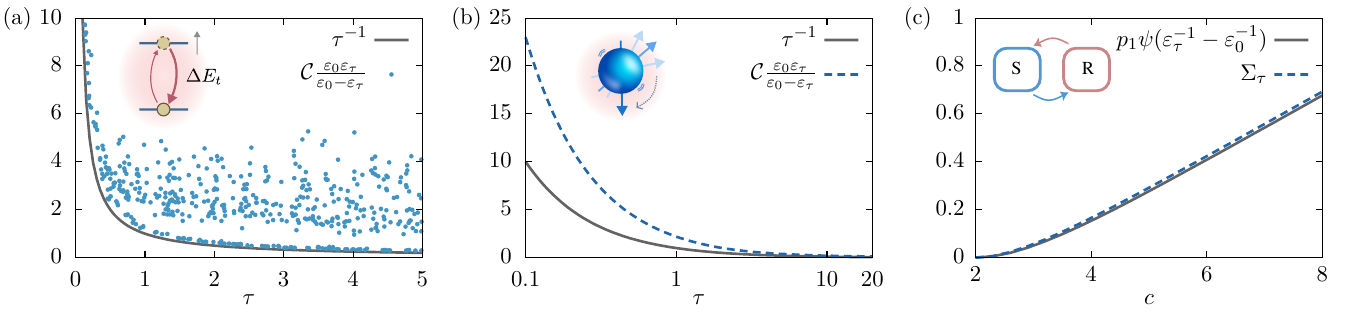}
\protect\caption{(a) Cooling the classical bit. The energy gap $\Delta E_t$ is a smooth function generated by interpolating a sequence of random numbers sampled in the range $[0.1,3]$. Each blue circle represents the product of cost and error [$\mca{C}\varepsilon_0\varepsilon_\tau/(\varepsilon_0-\varepsilon_\tau)$] as a function of time for a random configuration of the energy gap, while the solid line depicts the reciprocal of time ($\tau^{-1}$). Time $\tau$ is changed, whereas other parameters are fixed as $\ket{p_0}=[0.5,0.5]^\top$, $w_{\rm tot}=5$, and $\beta=1$. (b) Erasing information of the qubit. The product of cost and error and the reciprocal of time are depicted by the dashed and solid lines, respectively. Time $\tau$ is varied, while other parameters are fixed as $\varrho_0=\mds{1}/2$, $\alpha=0.1$, $\beta=1$, $E_0=0.1$, and $E_\tau=5$. (c) Reducing the smallest eigenvalue of quantum states using a swap protocol. The entropy production $\Sigma_\tau$ and the lower bound $p_1\psi(\varepsilon_\tau^{-1}-\varepsilon_0^{-1})$ are represented by the dashed and solid lines, respectively. The parameter $c$ is varied, whereas others are set to $p_1=e^{-2}$ and $p_d=0.5$.}\label{fig:NumRes}
\end{figure*}

\section{Demonstrations}\label{sec:demon}
In this section, we numerically illustrate our results in both classical and quantum systems.
\subsection{Classical bit}
We exemplify the trade-off relation \eqref{eq:main.res} in a cooling problem of a classical bit.
The bit is modeled as a two-level system, which is coupled to a thermal reservoir at inverse temperature $\beta$.
The time-dependent transition rates between the two levels satisfy the detailed balance condition, that is,
\begin{equation}
	\frac{w_{12}(t)}{w_{21}(t)}=e^{\beta\Delta E_t},
\end{equation}
where $\Delta E_t$ denotes the energy gap.
Cooling the bit to the ground state can be achieved by appropriately changing the energy gap, thereby enhancing the probability of finding the bit in the lowest-energy level. 
This kind of protocol can also function as an information eraser, which consistently drives the system to the ground state, irrespective of the initial state.
The error of cooling can be evaluated by examining the probability of the excited state, or equivalently by setting $\mfr{u}=\{2\}$.

We fix the total transition rate $w_{21}(t)+w_{12}(t)=w_{\rm tot}$ for all times and designate the energy gap $\Delta E_t$ as the sole control parameter.
The transition rates can be explicitly written as
\begin{equation}
	w_{12}(t)=w_{\rm tot}\frac{e^{\beta\Delta E_t}}{1+e^{\beta\Delta E_t}},~w_{21}(t)=w_{\rm tot}\frac{1}{1+e^{\beta\Delta E_t}}.
\end{equation}
The trade-off relation \eqref{eq:main.res} can be alternatively expressed as a lower bound for the product of cost and error in terms of time,
\begin{equation}\label{eq:main.res.ex1}
	\mca{C}\frac{\varepsilon_0\varepsilon_\tau}{\varepsilon_0-\varepsilon_\tau}\ge\frac{1}{\tau}.
\end{equation}
For each operational time $\tau$, we randomly generate $\Delta E_t$ as a function of time over the interval $[0,\tau]$, numerically calculate the relevant quantities, and verify the bound \eqref{eq:main.res.ex1} as shown in Fig.~\ref{fig:NumRes}(a).
It can be observed that the bound is valid for all configurations, and notably, it can be saturated when the energy gap $\Delta E_t$ is a monotonically increasing function of time.

\subsection{Qubit}
Next, we consider the problem of erasing information on a qubit.
The qubit is modeled as a spin-$1/2$ particle, which is weakly coupled to a thermal reservoir consisting of bosonic harmonic oscillators \cite{Leggett.1987.RMP}.
The time evolution of the density matrix of the qubit is described by the GKSL equation \eqref{eq:GKSL.eq} with the Hamiltonian and jump operators given by
\begin{align}
H_t&=\frac{E_t}{2}\qty[\cos(\theta_t)\sigma_z+\sin(\theta_t)\sigma_x],\\
L_1(t)&=\sqrt{\alpha E_t(N_t+1)}\dyad{0_t}{1_t},\\
L_2(t)&=\sqrt{\alpha E_tN_t}\dyad{1_t}{0_t}.
\end{align}
Here, $E_t$ and $\theta_t$ are controllable parameters, $\sigma_{x,y,z}$ denote the Pauli matrices, $\alpha$ is the coupling strength, $N_t\coloneqq 1/(e^{\beta E_t}-1)$ is the Planck distribution, and $\{\ket{0_t},\ket{1_t}\}$ denote the instantaneous energy eigenstates of the Hamiltonian.
Physically, $E_t$ is the energy gap between the energy eigenstates, whereas $\theta_t$ characterizes the strength of coherent tunneling.

While many feasible protocols exist for information erasure, we employ the following simple protocol \cite{Miller.2020.PRL.QLP} for clarity and ease of explanation: 
\begin{equation}
	E_t=E_0+(E_\tau-E_0)\sin\qty(\frac{\pi t}{2\tau})^2,~\theta_t=\pi\qty(\frac{t}{\tau}-1).
\end{equation}
For this protocol, the energy gap monotonically increases from $E_0$ to $E_\tau$, and the coherent parameter $\theta_t$ vanishes at the final time.
Therefore, the qubit is expected to be reset to the ground state of $H_\tau\propto\sigma_z$ upon completing the erasure process.
We set the initial density matrix to the maximally mixed state and vary the operational time $\tau$.
For each instance of time, we calculate the relevant quantities and verify the bound \eqref{eq:main.res.ex1} in Fig.~\ref{fig:NumRes}(b). 
As can be seen, the product of cost and error is consistently bounded from below by the reciprocal of time, thus validating the derived relation.

\subsection{Finite-size reservoir}
Last, we exemplify the relation \eqref{eq:main.res.nonMarkov} in a quantum process that reduces the smallest eigenvalue of quantum states.
We consider a $d$-dimensional system prepared in an initial mixed state $\varrho_S$ with full rank.
Let $\varrho_S=\sum_{n=1}^dp_n\dyad{n}$ be its spectral decomposition (i.e., $0<p_1\le\dots\le p_d$).
The smallest eigenvalue of the initial state $\varrho_S$ can be reduced by performing a swap between the states of the system and the reservoir.
The reservoir is another $d$-dimensional system initialized in a state 
\begin{equation}
	\pi_R=e^{-c}\dyad{1}+\sum_{n=2}^{d-1}p_n\dyad{n}+(p_d+p_1-e^{-c})\dyad{d},
\end{equation}
where $c\ge-\ln p_1$ is a positive constant.
The reservoir Hamiltonian is given by $H_R=-\ln\pi_R$, meaning that the reservoir is in a Gibbs thermal state at inverse temperature $\beta=1$.
In order to perform a swap, the unitary operator is designed such that $U(\varrho_S\otimes\pi_R)U^\dagger=\pi_R\otimes\varrho_S$, resulting in the final state of the system being $\varrho_S'=\pi_R$.
Upon completion, the smallest eigenvalue of the system state is reduced from $p_1$ to $e^{-c}\,(\le p_1)$.
Entropy production of this thermodynamic process can be calculated as
\begin{equation}
	\Sigma_\tau=\mca{D}(\varrho_S\|\pi_R)=p_1\ln\frac{p_1}{e^{-c}}+p_d\ln\frac{p_d}{p_d+p_1-e^{-c}}.
\end{equation}
In this case, the relation \eqref{eq:main.res.nonMarkov} can be quantitatively improved, resulting in the following bound of dissipation in terms of the errors:
\begin{equation}\label{eq:qua.res.alt}
	\Sigma_\tau\ge p_1\psi(\varepsilon_\tau^{-1}-\varepsilon_0^{-1}).
\end{equation}

We fix $p_1$ and $p_d$, and vary $c$, which determines the smallest eigenvalue of the final state.
All quantities appearing in the bound \eqref{eq:qua.res.alt} are calculated and plotted in Fig.~\ref{fig:NumRes}(c).
As can be observed, dissipation increases as the error decreases, and the bound is numerically validated.

\section{Summary and outlook}\label{sec:sum.out}
In this paper, we resolved a fundamental problem in thermodynamics by establishing a rigorous quantitative relationship among the foundational quantities of time, cost, and error. Specifically, we uncovered a three-way trade-off relation between these quantities for thermodynamic processes aimed at minimizing the observation probability to zero. Such processes are crucial in various scientific fields, including physics and biology. For example, in bit erasure or cooling systems to the ground state, the achievable error limits are inevitably tied to the resource expenditures of time and cost. The obtained relation indicates that reducing the error to zero necessitates infinite resources of time and cost. This finding deepens our understanding of thermodynamic limits and addresses several essential problems in nonequilibrium thermodynamics. These include the third law of thermodynamics in the form of the unattainability principle, the limitation on preparing separated states, the no-go theorem of exact copying, and the thermodynamic cost associated with accurate information erasure. Our study not only advances the theoretical framework of thermodynamics but also sets the stage for further investigations into the practical limitations and capabilities of both natural and engineered systems operating near these fundamental limits.

We discuss several future directions for further investigation. In this work, we identified the thermokinetic cost relevant to the task of generating separated states, thereby providing a unified framework that encompasses the third law of thermodynamics. Notably, this cost inherently includes thermodynamic contributions, which, although theoretically estimable, may be difficult to measure reliably in practical experimental settings. A promising direction for future research is to explore whether a more operationally accessible formulation of the third law can be established---one that preserves theoretical integrity while aligning more closely with measurable physical quantities.

Since this study primarily focused on discrete-variable systems, it is natural to consider the generalization of our results to continuous-variable systems, such as those governed by overdamped or underdamped Langevin dynamics. However, a straightforward extension---particularly through coarse-graining the continuous space into a finite number of discrete states---is generally not viable. This is because the dynamical activity in continuous systems generally diverges, even after coarse-graining, which renders the trade-off relation \eqref{eq:main.res} trivial. In the context of cooling quantum systems to their ground states, efficient protocols may involve continuous measurement and feedback control to reduce the average energy. In such cases, the recently developed theoretical framework \cite{Andersson.2022.PRL}, where the joint dynamics of the system and the controller is governed by a quantum Fokker-Planck master equation, can serve as a good starting point.

While we focused on the observation probability in this study, extending our results to other types of observables would be interesting. In nonequilibrium systems, observables play a key role in defining fundamental quantities. For example, temperature cannot be unambiguously defined as in thermal equilibrium, and its definition generally involves an energetic observable. For noisy molecular oscillators, the degree of synchronization is generally quantified by the order parameter, which is significantly constrained by the energy dissipation \cite{Zhang.2019.NP}. Generalizing our findings to other observables may uncover novel limitations of nonequilibrium systems that are not revealed by existing speed limits. Additionally, this approach could facilitate the direct study of continuous-variable systems without the need for coarse-graining.

Due to the analogy between the thermodynamic frameworks of Markov jump processes and chemical reaction networks \cite{Rao.2016.PRX}, it is possible to extend our results to the latter case. Note that the time evolution of chemical reactants is given by deterministic rate equations, and the concepts of adiabatic and nonadiabatic entropy production have been well established for complex-balanced networks. While the first and second law of thermodynamics have been formulated for both closed and open reaction systems, the third law of thermodynamics remains elusive. Our study can provide a resolution for this problem, suggesting that the third law can be discussed in a similar manner.

We also anticipate that the same trade-off relation applies to other information-processing tasks, such as measurement. In this type of process, the controller aims to acquire as much information about the target system as possible within a finite duration. However, achieving perfect measurement outcomes is prohibited. A recent study on quantum measurement has demonstrated that there exists a trade-off relation between time and error of measurement outcome \cite{Nakajima.2024.arxiv}. Therefore, it would be natural to elucidate a three-way trade-off relation among time, cost, and error for these processes \cite{Guryanova.2020.Q,Mohammady.2023.PRA}.

Finally, the existence of the trade-off relation for non-Markovian dynamics with {\it infinite}-dimensional reservoirs remains an open question. Our trade-off relation \eqref{eq:main.res.nonMarkov}, which was derived for finite-size reservoirs, includes a dimensional factor of the reservoirs. This factor impedes the achievement of a meaningful bound in the infinite-dimensional limit. Resolving this issue would provide a comprehensive understanding of the third law in the form of the unattainability principle for quantum dynamics \cite{Cleuren.2012.PRL,Kolar.2012.PRL,Kosloff.2013.E}.

\begin{acknowledgments}
This work was partially supported by the Future Development Funding Program of Kyoto University Research Coordination Alliance. {T.V.V.} was supported by JSPS KAKENHI Grant No.~JP23K13032. {K.S.} was supported by JSPS KAKENHI Grant No.~JP23K25796.
\end{acknowledgments}

\appendix

\section{Proof of the tradeoff relation \eqref{eq:main.res}}\label{app:proof.main.res}
Here we provide a detailed derivation of Eq.~\eqref{eq:main.res}.
Define the coarse-grained probability $p_{\mfr{d}}(t)\coloneqq\sum_{n\in \mfr{d}}p_n(t)$, which satisfies $p_{\mfr{u}}(t)+p_{\mfr{d}}(t)=1$.
The time evolution of the probability $p_{\mfr{u}}(t)$ can be described as
\begin{align}
	\dot p_{\mfr{u}}(t)&=\sum_{n\in \mfr{u}}\sum_{m(\neq n)}[w_{nm}(t)p_m(t)-w_{mn}(t)p_n(t)]\notag\\
	&=\sum_{n\in \mfr{u},m\in \mfr{d}}[w_{nm}(t)p_m(t)-w_{mn}(t)p_n(t)]\notag\\
	&=w_{\mfr{u}\mfr{d}}(t)p_{\mfr{d}}(t)-w_{\mfr{d}\mfr{u}}(t)p_{\mfr{u}}(t),
\end{align}
where $w_{\mfr{u}\mfr{d}}(t)$ and $w_{\mfr{d}\mfr{u}}(t)$ are coarse-grained transition rates, defined as
\begin{align}
	w_{\mfr{u}\mfr{d}}(t)&\coloneqq\frac{\sum_{n\in \mfr{u},m\in \mfr{d}}w_{nm}(t)p_m(t)}{\sum_{m\in \mfr{d}}p_m(t)},\\
	w_{\mfr{d}\mfr{u}}(t)&\coloneqq\frac{\sum_{n\in \mfr{u},m\in \mfr{d}}w_{mn}(t)p_n(t)}{\sum_{n\in \mfr{u}}p_n(t)}.
\end{align}
Note that $a_t=w_{\mfr{u}\mfr{d}}(t)p_{\mfr{d}}(t)+w_{\mfr{d}\mfr{u}}(t)p_{\mfr{u}}(t)$.
Furthermore, we can prove that the transition rate $w_{\mfr{d}\mfr{u}}(t)$ is always upper bounded by the maximum escape rate as follows:
\begin{align}
	w_{\mfr{d}\mfr{u}}(t)&=\frac{\sum_{n\in \mfr{u},m\in \mfr{d}}w_{mn}(t)p_n(t)}{\sum_{n\in \mfr{u}}p_n(t)}\notag\\
	&=\frac{\sum_{n\in \mfr{u}}p_n(t)\sum_{m\in \mfr{d}}w_{mn}(t)}{\sum_{n\in \mfr{u}}p_n(t)}\notag\\
	&\le\max_{n\in \mfr{u}}\sum_{m\in \mfr{d}}w_{mn}(t)\notag\\
	&=\omega_t.\label{eq:wJI.up}
\end{align}
Here, we apply the inequality $\sum_nx_n/\sum_ny_n\le\max_nx_n/y_n$ for any nonnegative numbers $\{x_n\}$ and $\{y_n\}$ to obtain the third line, whereas the last line is readily obtained from the definition of $\omega_t$.
The notation of time $t$ is omitted hereafter for the sake of notational simplicity.
For convenience, we define the following nonnegative function over $(0,+\infty)\times(0,+\infty)$:
\begin{equation}
	F(x,y)\coloneqq (x-y)\ln\frac{x}{y}.
\end{equation}
It can be verified that $F(x,y)$ is a symmetric convex function.
We can prove 
\begin{equation}
	\sigma_t\ge F(w_{\mfr{u}\mfr{d}}p_{\mfr{d}},w_{\mfr{d}\mfr{u}}p_{\mfr{u}})
\end{equation}
by utilizing the convexity of function $F$ as follows:
\begin{align}
	\sigma_t&=\sum_{m>n}F\qty(w_{nm}p_m,w_{mn}p_n)\notag\\
	&\ge\sum_{n\in \mfr{u},m\in \mfr{d}}F\qty(w_{nm}p_m,w_{mn}p_n)\notag\\
	&\ge|\mfr{u}||\mfr{d}|F\qty(\frac{\sum_{n\in \mfr{u},m\in \mfr{d}}w_{nm}p_m}{|\mfr{u}||\mfr{d}|},\frac{\sum_{n\in \mfr{u},m\in \mfr{d}}w_{mn}p_n}{|\mfr{u}||\mfr{d}|})\notag\\
	&=F(w_{\mfr{u}\mfr{d}}p_{\mfr{d}},w_{\mfr{d}\mfr{u}}p_{\mfr{u}}).
\end{align}
For any nonnegative numbers $x$ and $y$, it is evident that
\begin{equation}\label{eq:rel.Phi.func}
	\frac{F(x,y)}{x+y}=\phi\qty(\frac{x}{y}),
\end{equation}
where $\phi(x)=(x+1)^{-1}(x-1)\ln x \ge 0$. 
Since the equation $z=\phi(x)$ has solutions $x_1\le 1\le x_2$ for any $z>0$, we can define the inverse function $\phi^{-1}$ that takes values in $(0,1]$.
By this definition, we obtain from Eq.~\eqref{eq:rel.Phi.func} that
\begin{equation}\label{eq:inv.f.ine}
	x\ge\phi^{-1}\qty[\phi(x)].
\end{equation}
Noting that $\phi^{-1}(x)$ is a decreasing convex function and using Eq.~\eqref{eq:inv.f.ine}, we can calculate as follows:
\begin{align}
	-\frac{\dot p_{\mfr{u}}}{p_{\mfr{u}}}&=\frac{1}{p_{\mfr{u}}}\qty(w_{\mfr{d}\mfr{u}}p_{\mfr{u}}-w_{\mfr{u}\mfr{d}}p_{\mfr{d}})\notag\\
	&=w_{\mfr{d}\mfr{u}}\qty(1-\frac{w_{\mfr{u}\mfr{d}}p_{\mfr{d}}}{w_{\mfr{d}\mfr{u}}p_{\mfr{u}}})\notag\\
	&\le w_{\mfr{d}\mfr{u}}\qty{1-\phi^{-1}\qty[\phi\qty(\frac{w_{\mfr{u}\mfr{d}}p_{\mfr{d}}}{w_{\mfr{d}\mfr{u}}p_{\mfr{u}}})]}\notag\\
	&\le\omega_t\qty{1-\phi^{-1}\qty[\frac{F(w_{\mfr{u}\mfr{d}}p_{\mfr{d}},w_{\mfr{d}\mfr{u}}p_{\mfr{u}})}{w_{\mfr{u}\mfr{d}}p_{\mfr{d}}+w_{\mfr{d}\mfr{u}}p_{\mfr{u}}}]}\notag\\
	&\le\omega_t\qty[1-\phi^{-1}\qty(\frac{\sigma_t}{a_t})]=\omega_t\Phi\qty(\overline{\sigma}_t).\label{eq:proof.tmp1}
\end{align}
Taking the time integration of Eq.~\eqref{eq:proof.tmp1} from $t=0$ to $t=\tau$ leads to the following inequality:
\begin{align}
	\ln p_{\mfr{u}}(0)-\ln p_{\mfr{u}}(\tau)\le\int_0^\tau\dd{t}\omega_t\Phi\qty(\overline{\sigma}_t)=\tau\mca{C}_1,\label{eq:proof.tmp2}
\end{align}
which can be alternatively expressed as $\tau\mca{C}_1\ge\varepsilon_\tau^{-1}-\varepsilon_0^{-1}$.
Multiplying both sides of this inequality by $\varepsilon_\tau$ immediately gives
\begin{equation}\label{eq:proof.tmp5}
	\tau\mca{C}_1\varepsilon_\tau\ge 1-\eta.
\end{equation}
Since $\mca{C}_1\le\mca{C}$ due to $\omega_t\le\omega$ and the concavity of the function $\Phi(x)$, combining this with Eq.~\eqref{eq:proof.tmp5} directly leads to the trade-off relation \eqref{eq:main.res}.

\section{Generalization of the trade-off relation in the presence of unidirectional transitions}\label{app:proof.main.res.uni}
We consider a general dynamics that involves unidirectional transitions.
The transition rates can be generally expressed as
\begin{equation}
	w_{mn}(t)=\sum_{\gamma}w_{mn}^{(\gamma)}(t)+\sum_{\xi}w_{mn}^{(\xi)}(t),
\end{equation}
where $w_{mn}^{(\gamma)}(t)$ and $w_{mn}^{(\xi)}(t)$ denote the bidirectional and unidirectional transition rates, respectively.
That is for any $m\neq n$, $w_{mn}^{(\gamma)}>0$ whenever $w_{nm}^{(\gamma)}>0$, and at least one of the two transition rates, $w_{mn}^{(\xi)}$ and $w_{nm}^{(\xi)}$, is zero.
We define the entropy production rate, the dynamical activity rate, and the maximum escape rate as follows:
\begin{align}
	\sigma_t^b&\coloneqq\sum_{\gamma,m>n}[w_{mn}^{(\gamma)}p_n-w_{nm}^{(\gamma)}p_m]\ln\frac{w_{mn}^{(\gamma)}p_n}{w_{nm}^{(\gamma)}p_m},\\
	a_t^{b}&\coloneqq\sum_{\gamma,n\in \mfr{u},m\in \mfr{d}}[w_{mn}^{(\gamma)}p_n+w_{nm}^{(\gamma)}p_m],\\
	\omega_t^{b}&\coloneqq\max_{n\in\mfr{u}}\sum_{\gamma,m\in\mfr{d}}w_{mn}^{(\gamma)},\\
	\omega_t^{u}&\coloneqq\max_{n\in\mfr{u}}\sum_{\xi,m\in\mfr{d}}w_{mn}^{(\xi)}.
\end{align}
Hereafter, the superscripts $b$ and $u$ indicate that the corresponding quantity is calculated using only the bidirectional and unidirectional parts, respectively.
Similarly, we define the coarse-grained transition rates in terms of bidirectional contributions as
\begin{align}
	w_{\mfr{u}\mfr{d}}&\coloneqq\frac{\sum_{\gamma,n\in \mfr{u},m\in \mfr{d}}w_{nm}^{(\gamma)}p_m}{\sum_{m\in \mfr{d}}p_m},\\
	w_{\mfr{d}\mfr{u}}&\coloneqq\frac{\sum_{\gamma,n\in \mfr{u},m\in \mfr{d}}w_{mn}^{(\gamma)}p_n}{\sum_{n\in \mfr{u}}p_n}.
\end{align}
Note that $w_{\mfr{d}\mfr{u}}\le\omega_t^{b}$, $a_t^b=w_{\mfr{u}\mfr{d}}p_{\mfr{d}}+w_{\mfr{d}\mfr{u}}p_{\mfr{u}}$, and $\sigma_t^b\ge F(w_{\mfr{u}\mfr{d}}p_{\mfr{d}},w_{\mfr{d}\mfr{u}}p_{\mfr{u}})$.
Then, similarly to Eq.~\eqref{eq:proof.tmp1}, the time evolution of the coarse-grained probability $p_{\mfr{u}}(t)$ can be bounded as follows:
\begin{align}
	-\frac{\dot p_{\mfr{u}}}{p_{\mfr{u}}}&=\frac{1}{p_{\mfr{u}}}\qty(w_{\mfr{d}\mfr{u}}p_{\mfr{u}}-w_{\mfr{u}\mfr{d}}p_{\mfr{d}}+\sum_{\xi,n\in \mfr{u},m\in \mfr{d}}[w_{mn}^{(\xi)}p_n-w_{nm}^{(\xi)}p_m])\notag\\
	&\le \omega_t^b\Phi\qty(\overline{\sigma}_t^b)+\omega_t^{u}.
\end{align}
Consequently, by following the same procedure as in Appendix \ref{app:proof.main.res}, we obtain the desired relation with the generalized cost defined in Eq.~\eqref{eq:main.res.uni}.

\section{The trade-off relation in terms of a nonadiabatic cost}\label{app:nonad.tradeoff}
Here we derive the trade-off relation using the nonadiabatic entropy production.
To this end, we introduce the kinetic and thermodynamic contributions.
First, we define the maximum escape rate $\widetilde{\omega}_t\coloneqq\max_{n\in\mfr{u}}\sum_{m(\neq n)}w_{mn}$, which corresponds to the kinetic contribution.
Applying the inequality $\sum_nx_n/\sum_ny_n\le\max_nx_n/y_n$ for any nonnegative numbers $\{x_n\}$ and $\{y_n\}$, we can prove that $\widetilde{\omega}_t$ is lower bounded as
\begin{equation}
	\widetilde{\omega}_t\ge\frac{\sum_{n\in\mfr{u}}\sum_{m(\neq n)}w_{mn}p_n}{\sum_{n\in\mfr{u}}p_n}.
\end{equation}
Next, we introduce the thermodynamic contribution.
Let $\ket{p_t^{\rm ss}}$ be the instantaneous steady-state distribution of the dynamics at time $t$, that is, $\msf{W}_t\ket{p_t^{\rm ss}}=0$.
We consider the dual dynamics with the transition rates given by
\begin{equation}
	\widetilde{w}_{mn}\coloneqq\frac{w_{nm}p_m^{\rm ss}}{p_n^{\rm ss}}.
\end{equation}
It is evident that
\begin{equation}
	\sum_{m(\neq n)}\widetilde{w}_{mn}=-\widetilde{w}_{nn}=-w_{nn}=\sum_{m(\neq n)}w_{mn}.
\end{equation}
For convenience, we define the frequency of incoming and outgoing jumps associated with state $n$ as 
\begin{align}
	a_{n\leftarrow}&\coloneqq \sum_{m(\neq n)}w_{nm}p_m,\\
	\widetilde{a}_{n\rightarrow}&\coloneqq \sum_{m(\neq n)}\widetilde{w}_{mn}p_n.
\end{align}
Then, the dynamical activity rates in both the original and dual dynamics are the same,
\begin{equation}
	\sum_na_{n\leftarrow}=\sum_{n}\widetilde{a}_{n\rightarrow}.
\end{equation}
The rate of nonadiabatic entropy production, also known as Hatano-Sasa entropy production \cite{Hatano.2001.PRL}, can be defined as \cite{Esposito.2010.PRL}
\begin{equation}
	\sigma_t^{\rm na}\coloneqq\sum_{m\neq n}w_{nm}p_m\ln\frac{w_{nm}p_m}{\widetilde{w}_{mn}p_n}\ge 0.
\end{equation}
Note that $\sigma_t^{\rm na}$ vanishes when the system is in the stationary state.
In general, the nonadiabatic entropy production rate can be decomposed into nonnegative components as
\begin{equation}\label{eq:nonad.entprod.decom}
	\sigma_t^{\rm na}=\sum_n\sigma_n^{\rm na},
\end{equation}
where $\sigma_n^{\rm na}\ge 0$ is the rate associated with state $n$ and given by
\begin{equation}
	\sigma_n^{\rm na}\coloneqq\sum_{m(\neq n)}\qty(w_{nm}p_m\ln\frac{w_{nm}p_m}{\widetilde{w}_{mn}p_n}-w_{nm}p_m+\widetilde{w}_{mn}p_n).
\end{equation}
The dynamical activity associated with states belonging to the set $\mfr{u}$ is defined as
\begin{equation}
	a_t^{\rm na}\coloneqq\sum_{n\in\mfr{u}}\sum_{m(\neq n)}(w_{nm}p_m+w_{mn}p_n).
\end{equation}
Note that $a_t^{\rm na}=\sum_{n\in\mfr{u}}(a_{n\leftarrow}+\widetilde{a}_{n\rightarrow})$.
The definition introduced in Eq.~\eqref{eq:ther.cost.def} motivates us to define the nonadiabatic thermodynamic contribution as
\begin{equation}
	\overline{\sigma}_t^{\rm na}\coloneqq\frac{\sigma_t^{\rm na}}{a_t^{\rm na}}.
\end{equation}
Evidently, $\overline{\sigma}_t^{\rm na}=0$ if and only if $\sigma_t^{\rm na}=0$.

Define the function $\widetilde{\phi}(x)\coloneqq(x+1)^{-1}(x\ln x-x+1)$ over $(0,+\infty)$.
Since the equation $z=\widetilde{\phi}(x)$ has a solution $0\le x\le 1$ for any $0\le z\le 1$, we can define an inverse function $\widetilde{\phi}^{-1}\in[0,1]$ as $\widetilde{\phi}^{-1}(z)\coloneqq x$ for $z\in[0,1]$ and $\widetilde{\phi}^{-1}(z)=0$ for $z>1$.
Then, we have the following inequality from its definition:
\begin{equation}\label{eq:nonad.tmp1}
	x\ge\widetilde{\phi}^{-1}\qty[\widetilde{\phi}(x)].
\end{equation}
In addition, by exploiting the convexity of function $x\ln(x/y)-x+y$ over $(0,+\infty)\times(0,+\infty)$, we obtain the following inequality:
\begin{align}
	\sigma_t^{\rm na}&\ge\sum_{n\in\mfr{u}}\sigma_n^{\rm na}\notag\\
	&\ge\sum_{n\in\mfr{u}}\qty(a_{n\leftarrow}\ln\frac{a_{n\leftarrow}}{\widetilde{a}_{n\rightarrow}}-a_{n\leftarrow}+\widetilde{a}_{n\rightarrow})\notag\\
	&\ge \qty(\sum_{n\in\mfr{u}}a_{n\leftarrow})\ln\frac{\sum_{n\in\mfr{u}}a_{n\leftarrow}}{\sum_{n\in\mfr{u}}\widetilde{a}_{n\rightarrow}}-\sum_{n\in\mfr{u}}a_{n\leftarrow}+\sum_{n\in\mfr{u}}\widetilde{a}_{n\rightarrow}\notag\\
	&=a_t^{\rm na}\widetilde{\phi}\qty(\frac{\sum_{n\in\mfr{u}}a_{n\leftarrow}}{\sum_{n\in\mfr{u}}\widetilde{a}_{n\rightarrow}}),
\end{align}
which can be alternatively written as
\begin{equation}\label{eq:nonad.tmp2}
	\frac{\sigma_t^{\rm na}}{a_t^{\rm na}}\ge\widetilde{\phi}\qty(\frac{\sum_{n\in\mfr{u}}a_{n\leftarrow}}{\sum_{n\in\mfr{u}}\widetilde{a}_{n\rightarrow}}).
\end{equation}
By combining Eqs.~\eqref{eq:nonad.tmp1} and \eqref{eq:nonad.tmp2} and noticing that $\widetilde{\phi}^{-1}$ is a decreasing function, we get
\begin{equation}
	\widetilde{\phi}^{-1}\qty(\frac{\sigma_t^{\rm na}}{a_t^{\rm na}})\le\frac{\sum_{n\in\mfr{u}}a_{n\leftarrow}}{\sum_{n\in\mfr{u}}\widetilde{a}_{n\rightarrow}}.
\end{equation}
Applying this inequality, we can calculate as follows:
\begin{align}
	-\frac{\dot p_{\mfr{u}}}{p_{\mfr{u}}}&=\frac{1}{p_{\mfr{u}}}\qty(\sum_{n\in\mfr{u}}\sum_{m(\neq n)}w_{mn}p_n-\sum_{n\in\mfr{u}}\sum_{m(\neq n)}w_{nm}p_m)\notag\\
	&=\frac{1}{p_{\mfr{u}}}\qty(\sum_{n\in\mfr{u}}\sum_{m(\neq n)}\widetilde{w}_{mn}p_n-\sum_{n\in\mfr{u}}\sum_{m(\neq n)}w_{nm}p_m)\notag\\
	&=\qty(\frac{\sum_{n\in\mfr{u}}\sum_{m(\neq n)}w_{mn}p_n}{\sum_{n\in\mfr{u}}p_n})\qty(1-\frac{\sum_{n\in\mfr{u}}a_{n\leftarrow}}{\sum_{n\in\mfr{u}}\widetilde{a}_{n\rightarrow}})\notag\\
	&\le\qty(\frac{\sum_{n\in\mfr{u}}\sum_{m(\neq n)}w_{mn}p_n}{\sum_{n\in\mfr{u}}p_n})\qty[1-\widetilde{\phi}^{-1}\qty(\frac{\sigma_t^{\rm na}}{a_t^{\rm na}})]\notag\\
	&\le\widetilde{\omega}_t\widetilde{\Phi}(\overline{\sigma}_t^{\rm na}),\label{eq:proof.na.tmp0}
\end{align}
where we define $\widetilde{\Phi}\coloneqq 1-\widetilde{\phi}^{-1}$.
Taking the time integration of Eq.~\eqref{eq:proof.na.tmp0} from $t=0$ to $t=\tau$ leads to the following inequality:
\begin{align}
	\ln p_{\mfr{u}}(0)-\ln p_{\mfr{u}}(\tau)\le\int_0^\tau\dd{t}\widetilde{\omega}_t\widetilde{\Phi}\qty(\overline\sigma_t^{\rm na})=\tau\mca{C}_1^{\rm na},\label{eq:proof.na.tmp2}
\end{align}
where the nonadiabatic cost is defined as
\begin{equation}
	\mca{C}_1^{\rm na}\coloneqq\ev{\widetilde{\omega}_t\widetilde{\Phi}(\overline{\sigma}_t^{\rm na})}_\tau.
\end{equation}
Noting that $\widetilde{\Phi}$ is a concave function, the cost $\mca{C}_1^{\rm na}$ can be upper bounded as
\begin{equation}
	\mca{C}_1^{\rm na}\le\widetilde{\omega}\widetilde{\Phi}(\overline{\sigma}^{\rm na})\eqqcolon\mca{C}^{\rm na},
\end{equation} 
where $\widetilde{\omega}\coloneqq\max_{0\le t\le\tau}\widetilde{\omega}_t$ and $\overline{\sigma}^{\rm na}\coloneqq\ev{\overline{\sigma}_t^{\rm na}}_\tau$.
Consequently, the inequality \eqref{eq:proof.na.tmp2} yields the following hierarchy of trade-off relations:
\begin{equation}
	\tau\mca{C}^{\rm na}\varepsilon_\tau\ge\tau\mca{C}_1^{\rm na}\varepsilon_\tau\ge 1-\eta.
\end{equation}

\section{Trade-off relation for time-independent driving}\label{app:relax.tradeoff}
We consider thermodynamic processes that are driven by time-independent protocols.
That is, the system relaxes toward a stationary state, which can be equilibrium or nonequilibrium.
For such the processes, it was proved that the irreversible entropy production is lower bounded by the relative entropy between the initial and final distributions \cite{Shiraishi.2019.PRL,Vu.2021.PRL2},
\begin{equation}
	\Sigma_\tau\ge\mca{D}(p_0\|p_\tau),
\end{equation}
where $\mca{D}(p\|q)\coloneqq\sum_np_n\ln(p_n/q_n)$.
Due to the monotonicity of the relative entropy under information processing, we obtain a further lower bound for $\Sigma_\tau$ in terms of the relative entropy between coarse-grained probability distributions as
\begin{equation}
	\Sigma_\tau\ge\mca{D}(P_0\|P_\tau),
\end{equation}
where $\ket{P_t}\coloneqq[p_{\mfr{u}}(t),p_{\mfr{d}}(t)]^\top$.
By applying Proposition \ref{prop2}, we can arrive at the following inequality:
\begin{equation}
	\Psi\qty(e^{1/\varepsilon_0}\Sigma_\tau)\ge\varepsilon_\tau^{-1}-\varepsilon_0^{-1},
\end{equation}
which can be translated into the form of trade-off relation \eqref{eq:main.res} by defining the cost as $\mca{C}\coloneqq\tau^{-1}\Psi\qty(e^{1/\varepsilon_0}\Sigma_\tau)$.

Using an information-geometric approach \cite{Kolchinsky.2022.arxiv}, the irreversible entropy production can be decomposed into nonnegative housekeeping and excess parts as
\begin{equation}
	\Sigma_\tau=\Sigma_\tau^{\rm hk}+\Sigma_\tau^{\rm ex}.
\end{equation}
Roughly speaking, the excess entropy production is the minimum cost of a process driven by conservative forces with the same trajectory of probability distributions.
The excess entropy production was shown to be lower bounded by the relative entropy between the initial and final distributions as \cite{Kolchinsky.2022.arxiv}
\begin{equation}
	\Sigma_\tau^{\rm ex}\ge \mca{D}(p_0\|p_\tau).
\end{equation}
Following the same procedure as above, we readily obtain a similar trade-off relation in terms of the excess entropy production.

\section{Proof of Eq.~\eqref{eq:temp.lb}}\label{app:proof.temp.lb}
The definition \eqref{eq:eff.temp.def} of the effective temperature can be explicitly written as
\begin{equation}
	\sum_{n\ge 1}E_n\pi_n=\sum_{n\ge 1}E_np_n(\tau),\label{eq:ener.con}
\end{equation}
where $\pi_n=e^{-\beta E_n}/\sum_{n\ge 1}e^{-\beta E_n}$ is the Gibbs thermal state with inverse temperature $\beta\coloneqq T^{-1}$ and $0\le E_1=\dots=E_\kappa<E_{\kappa+1}\le\dots\le E_d$ are the energy levels.
First, we derive a bound for the effective inverse temperature $\beta$ in terms of the error $\varepsilon_\tau=-[\ln\sum_{n>\kappa}p_n(\tau)]^{-1}$ (i.e., $\mfr{u}=\iset{d}\setminus\iset{\kappa}$).
Equation \eqref{eq:ener.con} can be transformed as follows:
\begin{equation}\label{eq:ener.con.tmp1}
	\frac{\kappa E_1+\sum_{n>\kappa}E_ne^{-\beta(E_n-E_1)}}{\kappa+\sum_{n>\kappa}e^{-\beta(E_n-E_1)}}=\sum_{n\ge 1}E_np_n(\tau).
\end{equation}
Subtracting $E_1=E_1\sum_{n\ge 1}p_n(\tau)$ from both sides of Eq.~\eqref{eq:ener.con.tmp1}, it can be reduced to
\begin{equation}
	\frac{\sum_{n>\kappa}(E_n-E_1)e^{-\beta(E_n-E_1)}}{\kappa+\sum_{n>\kappa}e^{-\beta(E_n-E_1)}}=\sum_{n>\kappa}(E_n-E_1)p_n(\tau).\label{eq:ener.con.tmp2}
\end{equation}
Defining the energy gap $\Delta_g\coloneqq E_{\kappa+1}-E_1>0$ and the energy bandwidth $\Delta_b\coloneqq E_d-E_1>0$, we can evaluate both sides of Eq.~\eqref{eq:ener.con.tmp2} as
\begin{align}
	&\sum_{n>\kappa}(E_n-E_1)p_n(\tau)\ge \Delta_g\sum_{n>\kappa}p_n(\tau)=\Delta_gp_{\mfr{u}}(\tau),\\
	&\frac{\sum_{n>\kappa}(E_n-E_1)e^{-\beta(E_n-E_1)}}{\kappa+\sum_{n>\kappa}e^{-\beta(E_n-E_1)}}\le\frac{(d-\kappa)\Delta_be^{-\beta\Delta_g}}{\kappa+e^{-\beta\Delta_g}}.
\end{align}
Consequently, an upper bound on the inverse temperature can be obtained as
\begin{equation}
	\beta\le \frac{1}{\Delta_g}\ln\qty[\frac{(d-\kappa)\Delta_b}{\kappa\Delta_g}\frac{1}{p_{\mfr{u}}(\tau)}-\frac{1}{\kappa}].\label{eq:inv.temp.ub}
\end{equation}
On the other hand, using the trade-off relation \eqref{eq:main.res}, the following upper bound on $\ln[1/p_{\mfr{u}}(\tau)]$ can be derived:
\begin{equation}
	\ln\frac{1}{p_{\mfr{u}}(\tau)}=\varepsilon_\tau^{-1}\le\tau\mca{C}+\varepsilon_0^{-1}.\label{eq:inv.temp.ub.tmp1}
\end{equation}
Combining Eqs.~\eqref{eq:inv.temp.ub} and \eqref{eq:inv.temp.ub.tmp1} leads to a lower bound of the achievable temperature as
\begin{equation}
	T\ge\frac{\Delta_g}{\ln\qty[\frac{(d-\kappa)\Delta_b}{\kappa\Delta_g}e^{\tau\mca{C}+\varepsilon_0^{-1}}-\frac{1}{\kappa}]},\label{eq:temp.lb.res}
\end{equation}
which is exactly the relation \eqref{eq:temp.lb}.

\section{Proof of Eq.~\eqref{eq:copy.mi.lb}}\label{app:copy.mi.lb}
Since $I_{\rm max} = S_0(\msf{X}) = S_\tau(\msf{X})$, the difference between $I_{\rm max}$ and the mutual information $I_\tau$ can be expressed in terms of the conditional entropy as
\begin{equation}\label{eq:mi.tmp1}
	I_{\rm max} - I_\tau = \sum_{y} p_\tau(y) S_\tau(\msf{X}|y).
\end{equation}
Note that the Shannon entropy of any binary distribution can be bounded from below by its elements as
\begin{align}
S(p) &\coloneqq -p\ln p - (1-p)\ln(1-p) \notag\\
&\ge (2\ln 2) \min(p, 1-p).
\end{align}
Applying this inequality to Eq.~\eqref{eq:mi.tmp1}, we obtain a lower bound for $I_{\rm max} - I_\tau$ as follows:
\begin{align}
I_{\rm max} - I_\tau &\ge (2\ln 2) \min\{p_\tau(0,0), p_\tau(1,0)\} \notag\\
&+ (2\ln 2) \min\{p_\tau(0,1), p_\tau(1,1)\} \notag\\
&= (2\ln 2) \min[\min\{p_\tau(x=0), 1-p_\tau(x=0)\},\notag\\
&\hspace{2.15cm}\min\{p_e(\tau), 1-p_e(\tau)\}]. \label{eq:mi.tmp2}
\end{align}
Here, we use the identity $\min(x,y) + \min(z,w) = \min(x+z, y+w, x+w, y+z)$ to derive the last line. Next, note that $p_e(\tau) \le 1/2$, $\min(p, 1-p) = h[S(p)]$, where $h \in [0, 1/2]$ is the inverse function of $S(p)$, and $\min(x,y) \ge 2xy$ for any $0 \le x,y \le 1/2$. Using these relations, Eq.~\eqref{eq:mi.tmp2} further yields:
\begin{align}
I_{\rm max} - I_\tau &\ge (2\ln 2) \min\{h(I_{\rm max}), p_e(\tau)\} \notag\\
&\ge (4\ln 2) h(I_{\rm max}) p_e(\tau).
\end{align}
Finally, combining this inequality with Eq.~\eqref{eq:copy.err.lb} immediately leads to Eq.~\eqref{eq:copy.mi.lb},
\begin{equation}
I_{\rm max} - I_\tau \ge (4\ln 2) h(I_{\rm max}) p_e(0) e^{-\tau\mca{C}}.
\end{equation}

\section{Derivation of quantum generalizations}
\subsection{Markovian dynamics with infinite-size reservoirs}\label{app:proof.quantum.markov}
Here we derive the trade-off relation \eqref{eq:main.res} for quantum Markovian dynamics. Recall the spectral decomposition of the density matrix: $\varrho_t=\sum_np_n(t)\dyad{n_t}$.
For simplicity, we define the incoming and outgoing transition frequencies as
\begin{align}
	a_{n\leftarrow}&\coloneqq\sum_{k,m}w_{nm}^kp_{m},\\
	a_{n\rightarrow}&\coloneqq\sum_{k,m}w_{mn}^kp_n.
\end{align}
Note that $a_{n\leftarrow}+a_{n\rightarrow}=a_n$.
Following the same technique in Appendix \ref{app:proof.main.res}, we can prove that
\begin{equation}
	\frac{a_{n\leftarrow}}{a_{n\rightarrow}}\ge{\phi}^{-1}\qty[\phi\qty(\frac{a_{n\leftarrow}}{a_{n\rightarrow}})]={\phi}^{-1}\qty[\frac{F(a_{n\leftarrow},a_{n\rightarrow})}{a_n}]\ge {\phi}^{-1}\qty(\frac{\sigma_n}{a_n}).
\end{equation}
Taking the time derivative of $p_{n}(t)=\mel{n_t}{\varrho_t}{n_t}$, we can proceed as follows:
\begin{align}
	-\frac{\dot p_n}{p_n}&=\frac{1}{p_n}\sum_{k,m}\qty(w_{mn}^kp_n - w_{nm}^{k'}p_m)\notag\\
	&=\qty(\sum_{k,m}w_{mn}^k)\qty(1-\frac{a_{n\leftarrow}}{a_{n\rightarrow}})\notag\\
	&\le\qty(\sum_{k,m}w_{mn}^k)\qty[1-{\phi}^{-1}\qty(\frac{\sigma_n}{a_n})]\notag\\
	&\le\omega_t\qty[1-{\phi}^{-1}\qty(\frac{\sigma_n}{a_n})].
\end{align}
Here, we apply the following inequality to obtain the last line:
\begin{align}
	\sum_{k,m}w_{mn}^k&=\sum_{k,m}|\mel{m}{L_k}{n}|^2\notag\\
	&=\sum_{k,m}\mel{n}{L_k^\dagger}{m}\mel{m}{L_k}{n}\notag\\
	&=\sum_k\mel{n}{L_k^\dagger L_k}{n}\notag\\
	&\le\|\sum_kL_k^\dagger L_k\|=\omega_t.
\end{align}
Let $\ket{r}$ be a reference distribution, which will be determined later.
Taking the time derivative of the relative entropy $\mca{D}(r\|p_t)$, we can evaluate as follows:
\begin{align}
	\frac{d}{dt}\mca{D}(r\|p_t)&=-\sum_{n=1}^dr_n\frac{\dot p_n(t)}{p_n(t)}\notag\\
	&\le\sum_{n=1}^dr_n\omega_t\qty{1-{\phi}^{-1}\qty[\frac{\sigma_n(t)}{a_n(t)}]}\notag\\
	&\le\omega_t\qty[1-{\phi}^{-1}(\overline\sigma_t)]=\omega_t\Phi(\overline\sigma_t).\label{eq:qua.proof.tmp2}
\end{align}
Taking the time integration of Eq.~\eqref{eq:qua.proof.tmp2} from $t=0$ to $t=\tau$ leads to the following inequality:
\begin{align}
	\mca{D}(r\|p_\tau)-\mca{D}(r\|p_0)&\le\int_0^\tau\dd{t}\omega_t\Phi\qty(\overline\sigma_t)\eqqcolon\tau\mca{C}_1\notag\\
	&\le\tau\omega\Phi\qty(\overline\sigma)\eqqcolon\tau\mca{C},\label{eq:qua.proof.tmp3}
\end{align}
where $\omega\coloneqq\max_{0\le t\le\tau}\omega_t$ and $\overline\sigma\coloneqq\ev{\overline\sigma_t}_\tau$.
Now, let us specify the reference distribution as $r_n=\delta_{nk}$, where $k=\arg\max_np_n(0)/p_n(\tau)$.
By applying Proposition \ref{prop1}, we can further bound the quantity in the left-hand side of Eq.~\eqref{eq:qua.proof.tmp3} from below as
\begin{align}
	\mca{D}(r\|p_\tau)-\mca{D}(r\|p_0)&=\max_n\ln\frac{p_n(0)}{p_n(\tau)}\notag\\
	&\ge\max_n\ln\frac{p_n^\le(0)}{p_n^\le(\tau)}\notag\\
	&\ge\ln\frac{\sum_{n\in \mfr{u}}p_n^\le(0)}{\sum_{n\in \mfr{u}}p_n^\le(\tau)}\notag\\
	&=\varepsilon_\tau^{-1}-\varepsilon_0^{-1}.\label{eq:qua.proof.tmp4}
\end{align}
Combining Eqs.~\eqref{eq:qua.proof.tmp3} and \eqref{eq:qua.proof.tmp4} leads to the desired relation \eqref{eq:main.res}.

\subsection{Lower bound on the achievable temperature}\label{app:qtemp.lb}
Here we derive a lower bound on the achievable temperature in terms of time and cost.
To this end, we define the effective temperature $T$ associated with the final quantum state $\varrho_\tau$ as
\begin{equation}\label{eq:q.temp.def}
	\tr{H\varrho_\tau}=\tr{H\pi(H,T)},
\end{equation}
where $\pi(H,T)\coloneqq e^{-\beta H}/\tr e^{-\beta H}$ is the Gibbs thermal state and $\beta\coloneqq 1/T$.
Hereafter, we omit the time notation $\tau$ for simplicity.
Consider the spectral decompositions of the quantum state and the Hamiltonian, $\varrho=\sum_np_n^\le\dyad{n^\le}$ and $H=\sum_nE_n\dyad{\epsilon_n}$, where $0\le E_1=\dots=E_\kappa<E_{\kappa+1}\le\dots\le E_d$.
The equality \eqref{eq:q.temp.def} can be expressed as
\begin{equation}\label{eq:qtemp.tmp1}
	\sum_{n\ge 1}\mel{n^\le}{H}{n^\le}p_n^\le=\frac{\sum_{n\ge 1}E_ne^{-\beta E_n}}{\sum_{n\ge 1}e^{-\beta E_n}}.
\end{equation}
Subtracting $E_1$ from both sides of Eq.~\eqref{eq:qtemp.tmp1}, we obtain
\begin{equation}
	\sum_{n\ge 1}\qty(\mel{n^\le}{H}{n^\le}-E_1)p_n^\le=\frac{\sum_{n>\kappa}(E_n-E_1)e^{-\beta(E_n-E_1)}}{\kappa+\sum_{n>\kappa}e^{-\beta(E_n-E_1)}}.
\end{equation}
We consider the set $\mfr{u}=\{1\}$.
Since $p_n^\le\ge p_{\mfr{u}}$ for any $n\ge 1$, we can show that
\begin{align}
	&\sum_{n\ge 1}\qty(\mel{n^\le}{H}{n^\le}-E_1)p_n^\le\ge\sum_{n\ge 1}(E_n-E_1)p_{\mfr{u}}\ge(d-\kappa)\Delta_gp_{\mfr{u}},\\
	&\frac{\sum_{n>\kappa}(E_n-E_1)e^{-\beta(E_n-E_1)}}{\kappa+\sum_{n>\kappa}e^{-\beta(E_n-E_1)}}\le\frac{(d-\kappa)\Delta_be^{-\beta\Delta_g}}{\kappa+e^{-\beta\Delta_g}}.
\end{align}
Here, $\Delta_g\coloneqq E_{\kappa+1}-E_1$ and $\Delta_b\coloneqq E_d-E_1$ are the energy gap and the energy bandwidth, respectively.
Then, following the same procedure as in Appendix \ref{app:proof.temp.lb}, we obtain a similar lower bound on the achievable temperature:
\begin{equation}
	T\ge\frac{\Delta_g}{\ln\qty[\frac{\Delta_b}{\kappa\Delta_g}e^{\tau\mca{C}+\varepsilon_0^{-1}}-\frac{1}{\kappa}]}.
\end{equation}

\subsection{Non-Markovian dynamics with finite-size reservoirs}\label{app:proof.quantum.gen}
Here we show the proof of Eq.~\eqref{eq:main.res.nonMarkov}.
According to the monotonicity of the relative entropy under information processing, for any set of positive operator valued measures $\{\Pi_k\}_k$ satisfying $\sum_k\Pi_k=\mds{1}$, the following inequality holds true for any quantum states $\varrho$ and $\sigma$:
\begin{equation}
	\mca{D}(\varrho\|\sigma)\ge\sum_k\tr{\Pi_k\varrho}\ln\frac{\tr{\Pi_k\varrho}}{\tr{\Pi_k\sigma}}.\label{eq:rel.ent.proj}
\end{equation}
Let $\ket{n}$ and $\ket{\mu}$ be the eigenvectors corresponding to the smallest eigenvalues of $\varrho_S(\tau)$ and $\pi_R$, respectively.
Define two projective operators $\Pi_1\coloneqq\dyad{n,\mu}$ and $\Pi_2\coloneqq\mds{1}-\Pi_1$.
Applying the inequality \eqref{eq:rel.ent.proj} for these operators yields the following lower bound on the entropy production:
\begin{equation}\label{eq:ent.prod.lb}
	\Sigma_\tau\ge\mca{D}(u\|v),
\end{equation}
where $\ket{u}$ and $\ket{v}$ are two-dimensional probability distributions, given by
\begin{align}
	\ket{u}&\coloneqq[\tr{\Pi_1\varrho_\tau},\tr{\Pi_2\varrho_\tau}]^\top,\\
	\ket{v}&\coloneqq[\tr{\Pi_1\varrho_S(\tau)\otimes\pi_R},\tr{\Pi_2\varrho_S(\tau)\otimes\pi_R}]^\top.
\end{align}
By applying Proposition \ref{prop2}, the relative entropy between distributions $\ket{u}$ and $\ket{v}$ can be bounded further from below as
\begin{equation}\label{eq:KL.lb.ine}
	\mca{D}(u\|v)\ge u_1\psi(\max\{0,x\}),
\end{equation}
where $x\coloneqq\ln(u_1/v_1)$.
Note that $\lambda=\lambda_S(0)\lambda_R$ and $v_1=\lambda_S(\tau)\lambda_R$, where $\lambda_R$ denotes the smallest eigenvalue of $\pi_R$.
Furthermore, since $\lambda$ is the smallest eigenvalue of $\varrho_0$, we readily obtain that
\begin{equation}\label{eq:eigen.lb}
	u_1=\mel{n,\mu}{\varrho_\tau}{n,\mu}=\mel{n,\mu}{U\varrho_0U^\dagger}{n,\mu}\ge\lambda.
\end{equation}
Combining Eqs.~\eqref{eq:ent.prod.lb}, \eqref{eq:KL.lb.ine}, and \eqref{eq:eigen.lb} yields the following inequality:
\begin{equation}\label{eq:ent.prod.lb.tmp1}
	\lambda^{-1}\Sigma_\tau\ge \psi(\max\{0,x\}).
\end{equation}
Noticing that $\Psi$ is an increasing function, Eq.~\eqref{eq:ent.prod.lb.tmp1} immediately implies
\begin{equation}
	\Psi\qty(\lambda^{-1}\Sigma_\tau)\ge\max\{0,x\}\ge x.\label{eq:ent.prod.lb.tmp2}
\end{equation}
Finally, note that
\begin{equation}
	x=\ln(u_1/v_1)\ge\ln[\lambda_S(0)\lambda_R/\lambda_S(\tau)\lambda_R]=\varepsilon_\tau^{-1}-\varepsilon_0^{-1}.\label{eq:prob.rat.lb}
\end{equation}
Combining Eqs.~\eqref{eq:ent.prod.lb.tmp2} and \eqref{eq:prob.rat.lb} leads to the desired relation \eqref{eq:main.res.nonMarkov}.

\section{Proof of inequalities used in the derivation of main results}\label{app:useful.ines}

\begin{proposition}\label{prop1}
Given two probability distributions $\ket{p}=[p_1,\dots,p_d]^\top$ and $\ket{q}=[q_1,\dots,q_d]^\top$, let $\ket{p^{\le}}$ and $\ket{q^{\le}}$ be their corresponding sorted distributions (i.e., $p_1^\le\le\dots\le p_d^\le$ and $q_1^\le\le\dots\le q_d^\le$).
Then, the following inequality holds true:
\begin{equation}
	\max_n\frac{p_n}{q_n}\ge\max_n\frac{p_n^\le}{q_n^\le}.\label{eq:app.prob.rat.ine}
\end{equation}
\end{proposition}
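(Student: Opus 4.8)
The plan is to prove this by a short counting (pigeonhole) argument, reading off information about the ranks of entries rather than manipulating the ratios directly. First I would dispose of the degenerate cases: if some $q_j=0$ the left-hand side of \eqref{eq:app.prob.rat.ine} is $+\infty$ and there is nothing to prove, and if $p_{n}^\le=0$ for the index realizing the right-hand side then that side is $0$ and the inequality is trivial; so I may assume all $q_j>0$ and work with the index $n^\ast$ achieving $R:=p_{n^\ast}^\le/q_{n^\ast}^\le=\max_n p_n^\le/q_n^\le>0$. The goal is then to exhibit one \emph{original} index $j$ with $p_j/q_j\ge R$.

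Next I would introduce the two subsets of $\iset{d}$ defined by $A:=\{j\in\iset{d}:q_j\le q_{n^\ast}^\le\}$ and $B:=\{j\in\iset{d}:p_j\ge p_{n^\ast}^\le\}$. Writing $\ket{q^\le}$ as a permutation of $\ket{q}$, the $n^\ast$ entries $q_1^\le,\dots,q_{n^\ast}^\le$ are attained at $n^\ast$ distinct original indices, all lying in $A$, so $|A|\ge n^\ast$; symmetrically, the $d-n^\ast+1$ entries $p_{n^\ast}^\le,\dots,p_d^\le$ of $\ket{p^\le}$ are attained at distinct original indices lying in $B$, so $|B|\ge d-n^\ast+1$. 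Hence $|A|+|B|\ge d+1>|\iset{d}|$, which forces $A\cap B\neq\varnothing$. Picking any $j\in A\cap B$ gives $0<q_j\le q_{n^\ast}^\le$ and $p_j\ge p_{n^\ast}^\le>0$, so $p_j/q_j\ge p_{n^\ast}^\le/q_{n^\ast}^\le=R$, and therefore $\max_n p_n/q_n\ge R$, which is exactly \eqref{eq:app.prob.rat.ine}.

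The only point requiring care is the bookkeeping around ties among the entries of $\ket{p}$ or $\ket{q}$: if several entries coincide, the sorted vectors are only defined up to a choice of tie-breaking, and one must check that the cardinality bounds $|A|\ge n^\ast$ and $|B|\ge d-n^\ast+1$ hold for \emph{any} such choice. This is immediate because those bounds only use that $q_{n^\ast}^\le$ is \emph{an} $n^\ast$-th smallest value of $\ket{q}$ and $p_{n^\ast}^\le$ is \emph{a} $(d-n^\ast+1)$-th largest value of $\ket{p}$, which is true regardless of how ties are ordered. Conceptually, the statement expresses that pairing the two distributions in the same (increasing) order minimizes the largest ratio over all pairings, so any other pairing—in particular the given one—can only make the maximal ratio larger; the pigeonhole step above is simply the rigorous form of that observation, and I do not expect any serious obstacle beyond stating it cleanly.
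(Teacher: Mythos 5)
Your proof is correct, and it takes a genuinely different route from the paper's. The paper proves the stronger family of inequalities $\mca{D}_\alpha(p\|q)\ge\mca{D}_\alpha(p^\le\|q^\le)$ for every R\'enyi order $\alpha>1$ by applying the rearrangement inequality to $\sum_n p_n^\alpha/q_n^{\alpha-1}$ (the factors $p_n^\alpha$ and $q_n^{-(\alpha-1)}$ are oppositely monotone, so the similarly ordered pairing minimizes the sum), and then recovers the proposition as the $\alpha\to\infty$ limit via $\max_n p_n/q_n=e^{\mca{D}_\infty(p\|q)}$. Your argument instead works directly at $\alpha=\infty$: the sets $A=\{j:q_j\le q_{n^\ast}^\le\}$ and $B=\{j:p_j\ge p_{n^\ast}^\le\}$ have $|A|\ge n^\ast$ and $|B|\ge d-n^\ast+1$, so they intersect, and any common index witnesses $\max_n p_n/q_n\ge p_{n^\ast}^\le/q_{n^\ast}^\le$. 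Your version is more elementary and self-contained --- it avoids R\'enyi divergences, the rearrangement inequality, and the limit interchange, and it produces an explicit witnessing index; your closing remark about same-order pairing minimizing the maximal ratio is exactly the $\alpha=\infty$ shadow of the rearrangement step the paper uses. What the paper's route buys in exchange is the whole one-parameter family of monotonicity statements for finite $\alpha$, of which the proposition is only the endpoint; if one only needs the stated inequality, your pigeonhole argument is the leaner proof. Your handling of the degenerate cases (some $q_j=0$, or the maximizer having $p_{n^\ast}^\le=0$) and of ties in the sorted vectors is also adequate.
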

\begin{proof}
Consider the R{\'e}nyi divergence of order $\alpha$ of a distribution $\ket{p}$ from a distribution $\ket{q}$, defined as
\begin{equation}
	\mca{D}_\alpha(p\|q)\coloneqq\frac{1}{\alpha-1}\ln\sum_n\frac{p_n^\alpha}{q_n^{\alpha-1}}.
\end{equation}
Then, the log of the maximum ratio of the probabilities is given by
\begin{equation}
	\max_n\frac{p_n}{q_n}=e^{\mca{D}_\infty(p\|q)}.\label{eq:app.tmp6}
\end{equation}
First, we show that for any $\alpha>1$, the following inequality always holds:
\begin{equation}
	\mca{D}_\alpha(p\|q)\ge \mca{D}_\alpha(p^\le\|q^\le).\label{eq:app.tmp5}
\end{equation}
Indeed, according to the rearrangement inequality, we have
\begin{equation}
	\sum_n\frac{p_n^\alpha}{q_n^{\alpha-1}}\ge \sum_n\frac{(p_n^\le)^\alpha}{(q_n^\le)^{\alpha-1}},
\end{equation}
which immediately derives Eq.~\eqref{eq:app.tmp5}.
Taking the $\alpha\to\infty$ limit of Eq.~\eqref{eq:app.tmp5} leads to $\mca{D}_\infty(p\|q)\ge\mca{D}_\infty(p^\le\|q^\le)$.
Combining this with the equality \eqref{eq:app.tmp6} verifies Eq.~\eqref{eq:app.prob.rat.ine}.
\end{proof}

\begin{proposition}\label{prop2}
Given two distributions $\ket{p}=[p_1,p_2]^\top$ and $\ket{q}=[q_1,q_2]^\top$, the following inequality always holds true:
\begin{equation}\label{eq:app.lb.KL}
	\mca{D}(p\|q)\ge p_1\psi(\max\{0,x\}),
\end{equation}
where $x\coloneqq\ln(p_1/q_1)$.	
\end{proposition}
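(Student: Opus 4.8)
The plan is to reduce the bound to two elementary facts by writing the two-point relative entropy out explicitly and then splitting on the sign of $x$. First I would eliminate $\ket{q}$ using the defining relation $x=\ln(p_1/q_1)$: since $q_1=p_1 e^{-x}$, $q_2=1-q_1=1-p_1 e^{-x}$, and $p_2=1-p_1$, one obtains
\begin{equation*}
	\mca{D}(p\|q)=p_1 x+(1-p_1)\ln\frac{1-p_1}{1-p_1 e^{-x}}.
\end{equation*}
Dividing by $p_1$ (assuming for the moment $0<p_1<1$) shows that $p_1^{-1}\mca{D}(p\|q)$ is precisely the objective function in the definition $\psi(x)=\min_{p\in[0,1]}\bigl(x+\tfrac{1-p}{p}\ln\tfrac{1-p}{1-pe^{-x}}\bigr)$ evaluated at the admissible point $p=p_1$.

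For $x>0$ this is already the whole argument: feasibility of $p=p_1$ in the minimization gives $p_1^{-1}\mca{D}(p\|q)\ge\psi(x)=\psi(\max\{0,x\})$, and multiplying back by $p_1$ yields the claim. For $x\le 0$ I would instead observe that $\psi(0)=0$, because the inner expression $x+\tfrac{1-p}{p}\ln\tfrac{1-p}{1-pe^{-x}}$ vanishes identically in $p$ when $x=0$; hence the asserted bound degenerates to $\mca{D}(p\|q)\ge 0$, which is just nonnegativity of the relative entropy (Gibbs' inequality). Combining the two cases completes the proof.

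The main obstacle is not in the mathematical content but in the bookkeeping around degenerate configurations — $p_1\in\{0,1\}$, $q_1\in\{0,1\}$, or $x=\pm\infty$ — where some terms vanish or diverge. I would dispatch these separately: $p_1=0$ makes the right-hand side equal to $0$ so the bound is trivial; $q_1=0$ with $p_1>0$ makes $\mca{D}(p\|q)=+\infty$; and I would confirm that the map $p\mapsto\tfrac{1-p}{p}\ln\tfrac{1-p}{1-pe^{-x}}$ extends continuously to $p=0$ (limit $-(1-e^{-x})$) and to $p=1$ (limit $0$), so that the minimum defining $\psi$ is attained and the ``plug in $p=p_1$'' step is always legitimate. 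With those boundary checks in place the argument is completely elementary.
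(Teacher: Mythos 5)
Your proof is correct and follows essentially the same route as the paper: rewrite the two-point relative entropy as $p_1\bigl(x+\tfrac{1-p_1}{p_1}\ln\tfrac{1-p_1}{1-e^{-x}p_1}\bigr)$, bound it by $p_1\psi(x)$ via the minimization defining $\psi$ when $x>0$, and fall back on nonnegativity of the relative entropy together with $\psi(0)=0$ when $x\le 0$. The extra bookkeeping for degenerate configurations is a welcome but inessential addition.
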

\begin{proof}
The relative entropy $\mca{D}(p\|q)$ can be expressed in terms of $p_1$ and $x$ as follows:
\begin{equation}
	\mca{D}(p\|q)=p_1\qty(x+\frac{1-p_1}{p_1}\ln\frac{1-p_1}{1-e^{-x}p_1}).
\end{equation}
In the case of $x\le 0$, the inequality \eqref{eq:app.lb.KL} trivially holds since $\mca{D}(p\|q)\ge 0=\psi(0)=\psi(\max\{0,x\})$.
For $x>0$, the inequality \eqref{eq:app.lb.KL} can be proved using the definition of $\psi(x)$ as follows:
\begin{align}
	\mca{D}(p\|q)&\ge p_1\min_{p\in[0,1]}\qty(x+\frac{1-p}{p}\ln\frac{1-p}{1-e^{-x}p})\notag\\
	&=p_1\psi(x)=p_1\psi(\max\{0,x\}).
\end{align}
\end{proof}

\begin{proposition}\label{prop3}
The function $\psi(x)$ defined in Eq.~\eqref{eq:psi.func.def} is a monotonically increasing function and is bounded from below as
\begin{equation}\label{eq:psi.lb.KL}
	\psi(x)\ge\frac{2x}{1+e^{3/x}}
\end{equation}
for any $x\ge 0$.
\end{proposition}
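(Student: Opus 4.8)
The plan is to analyze $\psi(x)$ by computing the minimizing $p$ in its defining variational problem, and then obtain both monotonicity and the explicit lower bound from that analysis. First I would establish monotonicity: for fixed $p\in[0,1]$, the map $x\mapsto x+\tfrac{1-p}{p}\ln\tfrac{1-p}{1-pe^{-x}}$ is increasing in $x$ (its derivative equals $1-\tfrac{(1-p)pe^{-x}}{p(1-pe^{-x})}=\tfrac{1-p}{1-pe^{-x}}\ge 0$), so $\psi$, being a pointwise minimum over $p$ of an increasing family, is itself nondecreasing; strict monotonicity follows because the minimizer is interior for $x>0$. This also fixes the boundary value $\psi(0)=0$, consistent with $\psi$ being defined on $[0,+\infty]$.

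For the lower bound I would first reduce to a cleaner one-variable inequality. Introduce the substitution $s\coloneqq 1-p$ (or work directly with $p$) and differentiate the objective $g(p)\coloneqq x+\tfrac{1-p}{p}\ln\tfrac{1-p}{1-pe^{-x}}$ in $p$; setting $g'(p)=0$ gives a transcendental equation for the optimal $p^{*}(x)$. Rather than solving it exactly, I would instead bound $\psi(x)$ from below by plugging in a concrete, suboptimal choice of $p$ — but that goes the wrong way, since $\psi$ is a minimum. So the correct strategy is the opposite: I must lower-bound the objective $g(p)$ uniformly in $p$. The key estimate is a convexity/log-inequality bound on the term $\tfrac{1-p}{p}\ln\tfrac{1-p}{1-pe^{-x}}$, which is negative; I would bound it below using $\ln\tfrac{1-p}{1-pe^{-x}}\ge -\tfrac{p(1-e^{-x})}{1-p}$ (from $\ln(1+u)\le u$ applied with $u=\tfrac{p(e^{-x}-1)}{1-p}<0$... needing care with direction), which would give $g(p)\ge x - (1-e^{-x})$ — independent of $p$ — hence $\psi(x)\ge x-(1-e^{-x})$. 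Then it remains to check the elementary inequality $x-(1-e^{-x})\ge \tfrac{2x}{1+e^{3/x}}$ for all $x\ge 0$, which can be verified by clearing denominators and analyzing the resulting single-variable function (small-$x$ and large-$x$ asymptotics plus a monotonicity or sign check of one derivative).

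The main obstacle I anticipate is getting a \emph{uniform-in-$p$} lower bound on $g(p)$ that is tight enough near $x\to 0^{+}$, where $\psi(x)\sim x^{2}/2$ and the claimed bound $\tfrac{2x}{1+e^{3/x}}$ also vanishes faster than any power of $x$; a crude bound like $g(p)\ge x-(1-e^{-x})\sim x^{2}/2$ is actually \emph{stronger} than needed for small $x$, so the delicate regime is intermediate $x$, where one must confirm $x-1+e^{-x}\ge \tfrac{2x}{1+e^{3/x}}$ without slack to spare. If the crude bound fails somewhere, the fallback is to retain the $p$-dependence: estimate $p^{*}(x)$ by $p^{*}\approx 1-e^{-x}$ or similar (so that $1-pe^{-x}\approx 1-e^{-x}$), substitute, and track the error terms. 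In either route, the final elementary inequality check is routine; the real work is choosing the bounding step so that the constant $3$ in $e^{3/x}$ comes out correctly rather than some worse constant.
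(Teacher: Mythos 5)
Your route is correct and genuinely different from the paper's. For monotonicity you do the same thing as the paper (fix $p$, differentiate in $x$), though note a small algebra slip: the derivative of $\psi_p(x)\coloneqq x+\frac{1-p}{p}\ln\frac{1-p}{1-pe^{-x}}$ is $\frac{1-e^{-x}}{1-pe^{-x}}$, not $\frac{1-p}{1-pe^{-x}}$; either way it is nonnegative, so the conclusion stands. The real divergence is in the lower bound. Your key step --- the uniform-in-$p$ estimate $\frac{1-p}{p}\ln\frac{1-p}{1-pe^{-x}}=-\frac{1-p}{p}\ln\bigl(1+\frac{p(1-e^{-x})}{1-p}\bigr)\ge -(1-e^{-x})$ from $\ln(1+v)\le v$ with $v=\frac{p(1-e^{-x})}{1-p}\ge 0$ --- is valid, and it gives $\psi(x)\ge x-1+e^{-x}$. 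In fact this is sharp: $\psi_p(x)\to x-1+e^{-x}$ as $p\to 0^+$, so you have identified $\psi$ exactly, and all the slack sits in the final elementary inequality $x-1+e^{-x}\ge \frac{2x}{1+e^{3/x}}$. That inequality does hold and closes easily by cases: for $0<x\le 1$ use $x-1+e^{-x}\ge \frac{x^2}{2}-\frac{x^3}{6}\ge\frac{x^2}{3}$ and $1+e^{3/x}\ge e^{3/x}\ge 6/x$ (i.e.\ $e^t\ge 2t$ with $t=3/x$); for $x\ge 3$ use $x-1+e^{-x}\ge x-1$ and $1+e^{3/x}\ge 2+\frac3x$, giving $(x-1)(2+\frac3x)=2x+1-\frac3x\ge 2x$; for $1\le x\le 3$ keep one more term, $1+e^{3/x}\ge 2+\frac3x+\frac{9}{2x^2}$, and the residual reduces to $1+\frac{3}{2x}-\frac{9}{2x^2}+e^{-x}\bigl(2+\frac3x+\frac{9}{2x^2}\bigr)\ge 0$, which is checked directly. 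By contrast, the paper never eliminates $p$ first: it defines $g_p(x)\coloneqq\psi_p(x)-\frac{2x}{1+e^{3/x}}$, shows $g_p'(x)\ge 0$ using $(1-p)/(e^x-p)\le e^{-x}$, and then must verify a single-variable inequality via Taylor truncations of $e^{1/x}$ and $e^{3x}$ culminating in a degree-11 polynomial inequality settled by AM--GM. Your decomposition buys a closed-form (indeed exact) intermediate expression $x-1+e^{-x}$ and a far shorter endgame; the only thing left to do is actually write out the case analysis you deferred as ``routine,'' which, as sketched above, it is.
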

\begin{proof}
First, we prove the monotonicity of function $\psi$, i.e., $\psi(x)\ge \psi(y)$ for any $x\ge y\ge 0$.
Define the following function:
\begin{equation}
	\psi_p(x)\coloneqq x+\frac{1-p}{p}\ln\frac{1-p}{1-e^{-x}p}.
\end{equation}
Then, the equality $\psi(x)=\min_{p\in[0,1]}\psi_p(x)$ holds trivially.
We will prove that $\psi_p(x)\ge\psi_p(y)$ for any $x\ge y\ge 0$.
To this end, we take the derivative of $\psi_p(x)$ with respect to $x$ and obtain
\begin{equation}
	\psi_p'(x)=\frac{1-e^{-x}}{1-e^{-x}p}\ge 0.
\end{equation}
Therefore, $\psi(x)=\min_{p\in[0,1]}\psi_p(x)\ge\min_{p\in[0,1]}\psi_p(y)=\psi(y)$, which verifies the monotonicity of $\psi$.

Next, we derive the lower bound \eqref{eq:psi.lb.KL} for $\psi(x)$.
The inequality \eqref{eq:psi.lb.KL} can be explicitly written as
\begin{equation}
	\min_{p\in[0,1]}\qty(x+\frac{1-p}{p}\ln\frac{1-p}{1-e^{-x}p})\ge\frac{2x}{1+e^{3/x}}.
\end{equation}
Define the following function:
\begin{equation}
	g_p(x)\coloneqq x+\frac{1-p}{p}\ln\frac{1-p}{1-e^{-x}p}-\frac{2x}{1+e^{3/x}}.
\end{equation}
Proving Eq.~\eqref{eq:psi.lb.KL} is equivalent to showing that $g_p(x)\ge 0$ for any $x\in[0,+\infty)$ and $p\in[0,1]$.
To this end, it is sufficient to prove that $g_p(x)$ is an increasing function over $[0,+\infty)$ [because $g_p(0)=0$], or equivalently $g_p'(x)\ge 0$.
Taking the derivative of $g_p(x)$ with respect to $x$, we obtain
\begin{equation}
	g_p'(x)=1-\frac{2}{1+e^{3/x}}-\frac{1-p}{e^x-p}-\frac{6e^{3/x}}{x(1+e^{3/x})^2}.
\end{equation}
Since $(1-p)/(e^x-p)\le e^{-x}$, the following inequality holds true:
\begin{equation}
	g_p'(x)\ge 1-\frac{2}{1+e^{3/x}}-e^{-x}-\frac{6e^{3/x}}{x(1+e^{3/x})^2}.
\end{equation}
Therefore, we need only prove that
\begin{equation}
	1-\frac{2}{1+e^{3/x}}-e^{-x}-\frac{6e^{3/x}}{x(1+e^{3/x})^2}\ge 0,
\end{equation}
which is equivalent to the following inequality by changing $x\to 1/x$:
\begin{equation}
	e^{1/x}\qty(e^{6x}-6xe^{3x}-1)\ge (e^{3x}+1)^2.\label{eq:app.tmp1}
\end{equation}
Since
\begin{equation}
	e^{1/x}\ge\sum_{n=0}^{4}\frac{1}{n!x^n}=1+\frac{1}{x}+\frac{1}{2x^2}+\frac{1}{6x^3}+\frac{1}{24x^4}\eqqcolon h(x),
\end{equation}
the inequality \eqref{eq:app.tmp1} will be immediately validated if we could prove that
\begin{equation}
	h(x)\ge\frac{(e^{3x}+1)^2}{e^{6x}-6xe^{3x}-1}.\label{eq:app.tmp2}
\end{equation}
Through simple algebraic calculations, we can show that Eq.~\eqref{eq:app.tmp2} is equivalent to
\begin{widetext}
\begin{align}
	e^{3x}&\ge\frac{3xh(x)+1+\sqrt{[3xh(x)+1]^2+h(x)^2-1}}{h(x)-1}\notag\\
	&=3x+\frac{24x^4+72x^5+\sqrt{5184 x^{10}+13824 x^9+14400 x^8+9792 x^7+5184 x^6+2208 x^5+744 x^4+216 x^3+49 x^2+8 x+1}}{24 x^3+12 x^2+4 x+1}.\label{eq:app.tmp3}
\end{align}
To prove Eq.~\eqref{eq:app.tmp3}, it is sufficient to use the following lower bound of $e^{3x}$:
\begin{equation}
	e^{3x}\ge 1 + 3 x + \frac{9 x^2}{2} + \frac{9 x^3}{2} + \frac{27 x^4}{8}.
\end{equation}
Using this inequality and performing some transformations, Eq.~\eqref{eq:app.tmp3} is reduced to proving the following inequality:
\begin{align}
	139968 x^{11}+513216 x^{10}+828144 x^9+833328 x^8&+604152 x^7+283608 x^6+72819 x^5+64 x+192 \notag\\
	&\ge 5304 x^4+4296 x^3+1184 x^2.\label{eq:app.tmp4}
\end{align}
\end{widetext}
By applying the Cauchy-Schwarz inequality, it can be easily shown that
\begin{align}
	833328 x^8+100&\ge 2\sqrt{833328\times 100}x^4\ge 5304 x^4 + 12953 x^4,\\
	283608 x^6 + 20&\ge 2\sqrt{283608\times 20}x^3\ge 4296 x^3,\\
	12953 x^4 + 70&\ge 2\sqrt{12953\times 70}x^2\ge 1184 x^2.
\end{align}
Combining these inequalities verifies Eq.~\eqref{eq:app.tmp4}, thereby completing the proof.
\end{proof}

\begin{proposition}\label{prop4}
The inverse function $\phi^{-1}$ of $\phi(x)\coloneqq(x+1)^{-1}(x-1)\ln x$ is lower bounded as
\begin{equation}
	\phi^{-1}(x)\ge e^{-c\max\qty(\sqrt{x},x)}~\forall x\ge 0,
\end{equation}
where $c\approx 1.543$ is the solution of equation $z(1-e^{-z})=1+e^{-z}$.
\end{proposition}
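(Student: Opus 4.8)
The plan is to push the estimate through the monotonicity of $\Phi$ and then recognize a hyperbolic-tangent rewriting of $\phi$ that makes the inequality transparent. First I would substitute $x=e^{-s}$ with $s\ge 0$ and compute
\begin{equation}
	\phi(e^{-s})=\frac{(e^{-s}-1)(-s)}{e^{-s}+1}=s\,\frac{1-e^{-s}}{1+e^{-s}}=s\tanh(s/2).
\end{equation}
Since $s\tanh(s/2)$ is strictly increasing on $[0,\infty)$, vanishing at $s=0$ with limit $+\infty$, the map $\phi$ is a strictly decreasing bijection of $(0,1]$ onto $[0,\infty)$, and $\Phi$ is its decreasing inverse with $\Phi(0)=1$. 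Since $e^{-c\max(\sqrt{x},x)}\in(0,1]$ for every $x\ge 0$, applying $\phi$ to both sides of the claimed bound reverses it, so the assertion is equivalent to
\begin{equation}
	\phi\qty(e^{-c\max(\sqrt{x},x)})\ge x\qquad\forall x\ge 0.
\end{equation}
Writing $s\coloneqq c\max(\sqrt{x},x)$, the left-hand side equals $s\tanh(s/2)$, so it remains to verify $s\tanh(s/2)\ge x$ for all $x\ge 0$.

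Next I would pin down the constant: the defining equation $z(1-e^{-z})=1+e^{-z}$ is exactly $z\tanh(z/2)=1$, so $c$ is characterized by $\tanh(c/2)=1/c$. I would then split $s\tanh(s/2)\ge x$ into two regimes. For $x\le 1$ one has $\max(\sqrt{x},x)=\sqrt{x}$, hence $s=c\sqrt{x}\in[0,c]$ and $x=s^2/c^2$; dividing by $s$, the inequality reduces to $\tanh(s/2)\ge s/c^2$ on $[0,c]$. Because $u\mapsto\tanh u$ is concave on $[0,\infty)$, so is $s\mapsto\tanh(s/2)$, and it therefore lies above its chord joining $(0,0)$ and $(c,\tanh(c/2))=(c,1/c)$; that chord is precisely the line $s\mapsto s/c^2$, which settles the case. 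For $x\ge 1$ one has $\max(\sqrt{x},x)=x$, hence $s=cx\ge c$ and $x=s/c$; dividing by $s$, the inequality reduces to $\tanh(s/2)\ge 1/c$ for $s\ge c$, which holds since $\tanh$ is increasing and $\tanh(c/2)=1/c$. The endpoint $x=0$ is automatic ($\Phi(0)=1=e^0$), and the two regimes match consistently at $x=1$, where the bound is in fact tight, $\Phi(1)=e^{-c}$.

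The one point needing care is the reduction step: one must check that $\phi$ restricted to $(0,1]$ is a genuine decreasing bijection onto $[0,\infty)$ and that $e^{-c\max(\sqrt{x},x)}$ indeed lies in this branch, so that applying $\phi$ and then $\Phi$ to an inequality faithfully reverses it twice over. Beyond that the argument is entirely elementary once the identity $\phi(e^{-s})=s\tanh(s/2)$ is noticed — the whole content is the single concavity/chord comparison together with the algebraic identification $c\tanh(c/2)=1$, so none of the polynomial juggling of Proposition~\ref{prop3} is required.
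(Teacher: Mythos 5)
Your proof is correct, and it reaches the same two target inequalities as the paper via the same case split at $x=1$, but packages them more transparently. The paper works directly with $\phi(e^{-cx})$ and $\phi(e^{-c\sqrt{x}})$: for $x\ge 1$ it reduces the claim to $\frac{1+e^{-cx}}{1-e^{-cx}}\le c$ and invokes monotonicity, and for $x<1$ it shows the function $g(y)=\frac{1-e^{-cy}}{y(1+e^{-cy})}$ is decreasing by an explicit derivative computation (the sign of $g'$ is settled by checking $1+cy-e^{cy}\le 0$), then evaluates $g(\sqrt{x})\ge g(1)=1/c$. Your substitution $x=e^{-s}$, which turns $\phi$ into $s\tanh(s/2)$ and identifies $c$ through $c\tanh(c/2)=1$, makes the same two inequalities read as $\tanh(s/2)\ge 1/c$ for $s\ge c$ and $\tanh(s/2)\ge s/c^2$ for $s\in[0,c]$; the first is immediate from monotonicity exactly as in the paper, and the second follows from concavity of $\tanh$ via the chord through $(0,0)$ and $(c,1/c)$ --- which is the same fact the paper establishes by differentiating $g$, since for a concave function vanishing at the origin the secant slope is decreasing. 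So the content is identical, but your route replaces the one genuinely computational step with a standard convexity argument and makes the saturation at $x=1$ (where $\Phi(1)=e^{-c}$) visible for free. Your closing remark is fair but slightly misdirected: the polynomial estimates belong to the paper's proof of Proposition~\ref{prop3}, not Proposition~\ref{prop4}, whose published proof is already elementary.
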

\begin{proof}
We divide into two cases: $x\ge 1$ and $0\le x<1$.
For $x\ge 1$, it is sufficient to prove $\phi^{-1}(x)\ge e^{-cx}$.
Since $\phi(x)$ is a decreasing function over $[0,1]$, we need only show that
\begin{equation}
	\phi[\phi^{-1}(x)]=x\le\phi(e^{-cx})=\frac{cx(1-e^{-cx})}{1+e^{-cx}},
\end{equation}
which is equivalently expressed as
\begin{equation}\label{eq:prop4.tmp1}
	\frac{1+e^{-cx}}{1-e^{-cx}}\le c.
\end{equation}
Since the function in the left-hand side is a decreasing function with respect to $x$, Eq.~\eqref{eq:prop4.tmp1} is immediately obtained as
\begin{equation}
	\frac{1+e^{-cx}}{1-e^{-cx}}\le\frac{1+e^{-c}}{1-e^{-c}}=c.
\end{equation}
For the $0\le x<1$ case, we can apply the same strategy to prove $\phi^{-1}(x)\ge e^{-c\sqrt{x}}$, which is equivalent to the following inequality:
\begin{equation}\label{eq:prop4.tmp2}
	x\le\frac{c\sqrt{x}(1-e^{-c\sqrt{x}})}{1+e^{-c\sqrt{x}}}.
\end{equation}
Defining the function
\begin{equation}
	g(x)\coloneqq\frac{1-e^{-cx}}{x(1+e^{-cx})},
\end{equation}
then its derivative with respect to $x$ can be calculated as
\begin{equation}
	g'(x)=\frac{1-e^{2cx}+2cxe^{cx}}{x^2(1+e^{cx})^2}.
\end{equation}
It is evident that $g'(x)\le 0$ because
\begin{equation}
	\frac{d}{dx}\qty(1-e^{2cx}+2cxe^{cx})=2ce^{cx}(1+cx-e^{cx})\le 0.
\end{equation}
Since $g(x)$ is a decreasing function, we obtain $g(\sqrt{x})\ge g(1)=1/c$, which is exactly Eq.~\eqref{eq:prop4.tmp2}.
This completes the proof.
\end{proof}

\end{document}